\def\Q{\mathbf{Q}}
\def\P{\mathbf{P}}
\def\I{\mathbf{I}}
\def\n{\mathbf{n}}
\def\m{\mathbf{m}}
\def\r{\mathbf{r}}
\def\_v{\mathbf{v}}
\crefname{hypothesis}{Hypothesis}{Hypotheses}
\newcommand{\TheTitle}{Nematic liquid crystals in rectangular confinement}
\newcommand{\TheAuthors}{B. Shi, Y. Han, L. Zhang}
\headers{\TheTitle}{\TheAuthors}
\title{{\TheTitle}\thanks{This work was funded by the National Natural Science Foundation of China No. 12050002.}}
\author{Baoming Shi\thanks{School of Mathematical Sciences, Peking University, Beijing 100871, China (\email{ming123@stu.pku.edu.cn}).}
\and  Yucen Han\thanks{Department of Mathematics and Statistics, University of Strathclyde, G1 1XQ, UK. (\email{yucen.han@strath.ac.uk}).}
\and Lei Zhang\thanks{Beijing International Center for Mathematical Research, Center for Quantitative Biology, Peking University, Beijing 100871, China (\email{zhangl@math.pku.edu.cn}).}
}
\renewcommand{\TheTitle}{Nematic liquid crystals in a rectangular confinement: solution landscape and bifurcation}
\begin{document}

\maketitle

\begin{abstract}
We study the solution landscape and bifurcation diagrams of nematic liquid crystals confined on a rectangle, using a reduced two-dimensional Landau--de Gennes framework in terms of two geometry-dependent variables: half short edge length $\lambda$ and aspect ratio $b$. 
First, we analytically prove that, for any $b$ with a small enough $\lambda$ or for a large enough $b$ with a fixed domain size, there is a unique stable solution that has two line defects on the opposite short edges. Second, we numerically construct solution landscapes by varying $\lambda$ and $b$, and report a novel X state, which emerges from saddle-node bifurcation and serves as the parent state in such a solution landscape. Various new classes are then found among these solution landscapes, including the X class, the S class, and the L class. By tracking the Morse indices of individual solutions, we present bifurcation diagrams for nematic equilibria, thus illustrating the emergence mechanism of critical points and several effects of geometrical anisotropy on confined defect patterns.

\end{abstract}

\begin{keywords}
	bifurcation, Landau--de Gennes model, nematic liquid crystals, solution landscape, saddle point
\end{keywords}

\begin{AMS}
	35Qxx, 49Mxx, 35J20
\end{AMS}

\section{Introduction} 
A nematic liquid crystal (NLC) is a viscoelastic anisotropic material that exists between solid and liquid phases of matter and retains an orientational order but has no positional order \cite{de1993physics}. Its moleculars tend to align along certain locally preferred directions, referred to as ``directors'' in the literature. Consequently, NLCs have direction-dependent physical, optical and rheological properties \cite{sonin2018pierre,stewart2019static}, which are applicable to fields such as nanoscience, biophysics, and material design. NLCs usually exhibit topological defects, which are regions of reduced orientational order where the nematic directors cannot be defined. Defects commonly exist as line defects and point defects, and are further classified in topological degrees as $\pm 1$ and $\pm 1/2$ point defects. Although defects are energetically unfavorable, they are unavoidable under geometric confinement. On the application side, defect structures can produce different optical properties and thus play an important role in designing self-assembly structures and colloidal suspensions \cite{foffano2014dynamics,miller2014design}.

In recent decades, several mathematical theories for NLCs, from microscopic models to macroscopic models, have been proposed \cite{de1993physics,doi1988theory,wang2021modeling}. Microscopic Onsager models with different potential kernels, for instance, have been applied to describe the static and dynamic phenomena of liquid crystals \cite{zhang2012onsager,liang2014rigid,liu2005axial,yao2018topological,yin2021solution}. In this paper, however, we focus on the macroscopic Landau--de Gennes (LdG) theory, which describes NLC states and phase transitions with a macroscopic order parameter, $\Q$ tensor \cite{de1993physics}. The LdG theory has been widely used in mathematical studies for confined NLCs, in both two-dimensional (2D) and three-dimensional (3D) confinements \cite{hu2016disclination,muvsevivc2006two,bajc2016mesh}. In a 2D setting, the reduced Landau--de Gennes (rLdG) model, which ignores the ``out-of-plane" order, has been successfully used in a number of NLC studies \cite{han2020reduced,canevari2017order,robinson2017molecular}. This reduced model can capture the type, dimension, location of defects, and the profile of directors in a plane with two degrees of freedom. 

A topologically confined NLC system can admit of multiple stable and unstable equilibria, which corresponds to critical points of NLC free energy. One can classify these critical points by using the Morse index \cite{milnor2016morse}, i.e., the number of negative eigenvalues of their Hessian. The coexistence of multiple states is a desirable feature in the liquid-crystal display industries that produce bistable and multistable liquid crystal devices \cite{kitson2002controllable}. 
In particular, the square domain is a commonly used as a 2D confinement in the study of NLCs \cite{canevari2017order,robinson2017molecular,tsakonas2007multistable,kralj2014order,yin2020construction}. For example, by applying a 2D LdG model, Tsakonas {\it et al.} reported two stable experimentally observed NLC states confined in 2D square \cite{tsakonas2007multistable}.
When the domain size is sufficiently small, the Well Order Reconstruction Solution (WORS) with a pair of orthogonal line defects is the unique solution \cite{kralj2014order}. The WORS exists for all square domains but loses stability when domain size increases. When the domain size is large, two kinds of stable states emerge: a diagonal (D) solution, for which the nematic director aligns along one of the square diagonals, and a rotated (R) solution, for which the director rotates by $\pi$ radians between a pair of opposite square edges \cite{canevari2017order}. In a recent work, Yin {\it et al.} proposed a solution landscape---that is, a pathway map of all connected critical points---to systematically compute possible nematic equilibria confined on a square domain \cite{yin2020construction}.

Although the square domain is well studied in existing literature, it constitutes a special geometry domain for NLC systems. 
For example, the cross structure of the WORS is not generic and will not be found on any other regular polygons \cite{han2020solution}. Moreover, the symmetry anisotropy results in a loss of degeneracy between some rotationally equivalent solutions on the square. 
Thus, symmetry breaking in the square, e.g., the rectangle, may lead to a huge change of the nematic equilibria.
Multiple stable states in the rectangle have been reported in \cite{yao2018topological,fang2020surface}.
For instance, in a nanoscale rectangle, there is a unique solution, BD-S, which has two line defects along the opposite short edges. On the other hand, in a macroscopic rectangle, there are two kinds of R solutions, depending on which of the opposite edges the director rotates \cite{fang2020surface}. What is more, there are different types of bifurcations in the rectangular confinement, while only the pitchfork bifurcations are observed in rLdG studies of the square domain \cite{canevari2017order,robinson2017molecular,yin2020construction}. Different bifurcations have varied influences on nematic equilibria of the rLdG system. The WORS, for example, always gives the parent state (the highest-index saddle point) of the solution landscape on the square \cite{yin2020construction}, whereas some saddle-node bifurcations (i.e., new solutions that appear without connecting to other branches) in the hexagon change the parent state from a Ring solution to a new critical solution \cite{han2020solution}. Therefore, it is very natural to study nematic equilibria inside rectangle domains to study the effect of geometrical anisotropy.


In this paper, we investigate the solution landscapes and bifurcation diagrams of the rLdG model confined on 2D rectangles with a tangent boundary condition. There are two geometry-dependent variables in a rectangle: the half short edge length $\lambda$ and the aspect ratio $b$. First, we analytically prove that for any $b$ with a small $\lambda$, or for a large $b$ with a fixed domain size, there is a unique stable solution with two line defects on the opposite short edges.
Next, we numerically construct the solution landscapes of the rLdG model by varying the half short edge length $\lambda$ and the aspect ratio $b$. We report a novel X state, which is the analog of the WORS, that emerged from a saddle-node bifurcation and now serves as the parent state of the rectangle's solution landscape. Moreover, various new classes are found in such solution landscapes, including X class, S class, and L class. With a large $\lambda$, X class includes the high-index saddle points such as the X state, and the solutions in S and L classes have multiple interior point defects and line defects along the short and long edges, such as BD-S and BD-L, respectively. By tracking Morse indices of individual solutions, we present bifurcation diagrams to investigate the emergence mechanism and the effect of geometrical anisotropy on nematic equilibria.

This paper is organized as follows.
In Section 2, we briefly review LdG and rLdG theories. In Section 3, we analytically prove that for any ratio $b \geqslant 1$ with a small enough $\lambda$, or with a large enough $b$ of a fixed domain size, the critical solution of the rLdG free energy is unique.
In Section 4, we describe the numerical method of saddle dynamics and the construction of the solution landscape. 
In Section 5, we systematically study the solution landscapes and bifurcation diagrams as a function of $\lambda$ and $b$. We finally present our conclusion and discussion in Section 6.

\section{The Landau--de Gennes theory}\label{sec:framework}
The LdG theory describes the NLC state with a macroscopic order parameter---the $\Q$-tensor, which is a symmetric, traceless $3\times 3$ matrix. A $\Q$-tensor is said to be isotropic if $\Q = 0$, uniaxial if $\Q$ has a pair of degenerate nonzero eigenvalues, and biaxial if $\Q$ has three distinct eigenvalues \cite{de1993physics,mottram2014introduction}.
A uniaxial $\Q$-tensor can be written compactly as
\begin{equation}\label{eq:uniaxial}
 \Q = s\left(\n \otimes \n - \frac{\mathbf{I}}{3} \right),
\end{equation}
where $\I$ is the identity matrix, $s$ is an order parameter which measures the degree of orientational order and $\n$ is referred to as the director which is the eigenvector corresponding to the non-degenerate eigenvalue. The director labels the single distinguished direction of uniaxial nematic alignment \cite{primicerio1995eg}. 
\begin{figure}[H]
 \centering
 \includegraphics[width=.45\textwidth]{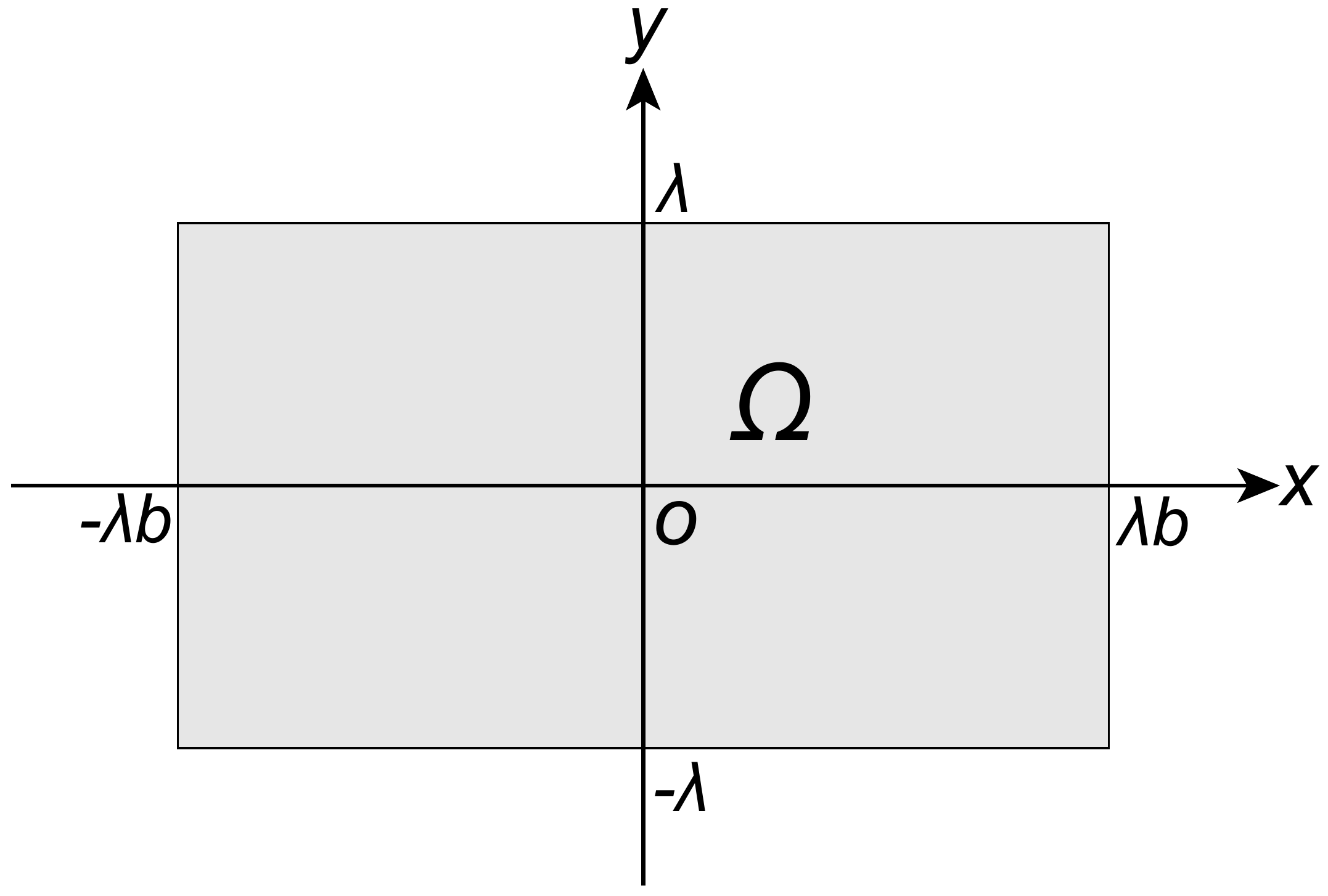}
 \caption{The rectangular domain}
 \label{Omega}
\end{figure}

We work with a simple form of the LdG free energy as
\begin{equation}
 E[\Q]: = \int_{\Omega} \left[\frac{L}{2}\left| \nabla \Q \right|^2 + f_B\left( \Q \right)\right]\mathrm{d}A. 
\end{equation}
The working domain is a rectangle, $\Omega=\left[ -b\lambda,b\lambda\right]\times \left[-\lambda,\lambda \right]$ with the half short edge length $\lambda$ and the aspect ratio $b>1$ (Fig. \ref{Omega}). 
$L$ is a positive material-dependent elastic constant. We work with the simplest form of the elastic energy density--the one-constant elastic energy density, and $f_B$ is a quartic polynomial bulk energy density, i.e.,
\begin{equation}\label{f_B}
	\begin{aligned}
 \left| \nabla \Q \right|^2:&=\frac{\partial Q_{ij}}{\partial r_k}\frac{\partial Q_{ij}}{\partial r_k},\ i,j = 1,\cdots,3,\ k = 1,2,\\
 f_B&(\Q): = \frac{A}{2}\mathrm{tr} \Q^2 - \frac{B}{3} \mathrm{tr} \Q^3 + \frac{C}{4} (\mathrm{tr} \Q^2)^2,
\end{aligned}
\end{equation}
where $\mathbf{r}:=(x,y)$ and we use the Einstein summation convention. 
The thermotropic potential $f_B$ drives the isotropic-nematic phase transition as a function of the temperature.
The variable $A=\alpha (T - T^*)$ is the rescaled temperature, $\alpha>0$ and $T^*$ is a characteristic liquid crystal temperature, below which the isotropic phase $\Q=0$ loses stability. $\alpha, B, C>0$ are material-dependent bulk constants.  

The rLdG model with order parameter $\P$-tensor, a $2\times 2$ traceless and symmetric matrix, has been widely used for the systems confined in the 2D domain both for capturing the qualitative properties of physically relevant solutions and for probing into defect cores \cite{muvsevivc2006two,brodin2010melting, bisht_epl,gupta2005texture,Mu2008Self}. 
From \cite{golovaty2017dimension}, one can restrict the Q-tensors with $\mathbf{z}$ as a fixed eigenvector. 
For a special temperature $A = -\frac{B^2}{3C}$, the order parameter $\mathbf{Q}$ in LdG model can be reduced to $\mathbf{P}$ by removing the ``out-of-plane" order \cite{canevari_majumdar_wang_harris}. The relationship between the LdG and rLdG models can be found in \cite{han2020reduced}. 

The order parameter $\P$ can be written as
\begin{equation}
 \P =
 \left(
 \begin{tabular}{cc}
  $P_{11}$  & $P_{12}$ \\
  $P_{12}$  & $-P_{11}$ 
 \end{tabular}
 \right) = r(\mathbf{m}\otimes\mathbf{m}-\mathbf{I}/2),
 \label{eq: P-tensor} 
\end{equation}
where $\mathbf{m}$ is the in-plane nematic direction, $r$ is the nematic order on $\mathbf{m}$, $\mathbf{I}$ is the $2\times 2$ identity matrix. We track defects by using the nodal set or the zero set of $\P$, which is consistent with disorder in the plane of $\Omega$. 

The rLdG energy is given by
 \begin{equation}\label{p_energy}
  E[\P]: = \int_{\Omega} \left[\dfrac{L}{2}|\nabla \P|^2+ \left(-\dfrac{B^2}{4C}\mathrm{tr}\P^2+\dfrac{C}{4}\left(\mathrm{tr}\P^2\right)^2\right) \right]\mathrm{d} A.
 \end{equation}
The physically relevant states are modelled as local or global energy minima subject to the imposed boundary conditions.

We impose the tangent Dirichlet boundary condition, which requires $\m$ in \eqref{eq: P-tensor} to be tangent to $\partial \Omega$. Such boundary condition means the well molecules in contact with surfaces are constrained to be in the plane of these surfaces, and it is consistent with the experiments \cite{tsakonas2007multistable}. There is a mismatch at four corners, we take the same linear interpolation approach in \cite{han2020reduced,fang2020surface,luo2012multistability}. The Dirichlet condition, $\P=\P_{bc}$ on $\partial \Omega$, which is defined in terms of a function with $0<\epsilon\ll 1/2$,
\begin{equation}
    \label{bc}
    \begin{aligned}
    &\P_{bc}(x=\pm b \lambda,y) =\frac{B}{2C}
    \left(
    \begin{tabular}{cc}
        $-T_\epsilon(\frac{y}{\lambda})$  & $0$ \\
        $0$  & $T_\epsilon(\frac{y}{\lambda})$ 
    \end{tabular}
    \right),\\
    &\P_{bc}(x,y=\pm \lambda) =\frac{B}{2C}
    \left(
    \begin{tabular}{cc}
        $T_\epsilon(\frac{x}{b\lambda})$  & $0$ \\
        $0$  & $-T_\epsilon(\frac{x}{b\lambda})$ 
    \end{tabular}
    \right),
    \end{aligned}
\end{equation} 
where
\begin{equation}
    T_\epsilon(t)=\begin{cases}
        (1+t)/\epsilon, \ -1\leqslant t \leqslant -1+\epsilon,\\
        1, \ |t|\leqslant 1-\epsilon,\\
        (1-t)/\epsilon, \ 1-\epsilon\leqslant t \leqslant 1. 
    \end{cases}
\end{equation}
With sufficiently small $\epsilon>0$, the qualitative solution profiles are not changed by the choice of the interpolation.

We define our admissible space as
\begin{equation}\label{A}
 \mathcal{A}=\left\{ \P\in H^1(\Omega;S_0); \P=\P_{bc} \text{ on } \partial \Omega \right\},
\end{equation}
where
\begin{equation}
 S_0=\left\{\P \in \mathbb{M}^{2 \times 2};\P_{ij}=\P_{ji},\P_{ii}=0\right\}.
\end{equation} 

\section{Theoretical analysis}\label{sec: theoretical analysis}
We rescale the system with $(\bar{x},\bar{y}) = (\frac{x}{\lambda b},\frac{y}{\lambda})$, and define a new parameter ${\bar{\lambda}}^2 = \dfrac{2C\lambda^2}{L}$. With $\bar{E}[\P]=E[\P]/b$, the energy in \eqref{p_energy} is rescaled to
\begin{equation}
	\bar{E}[\P]=\int_{\bar{\Omega}} \left[\dfrac{1}{2b^2}\left| \P_{x} \right|^2 + \dfrac{1}{2}\left| \P_{y} \right|^2+\frac{\bar{\lambda}^2}{2C}\left(-\frac{B^2}{4C}\left|\textbf{P}\right|^2+\frac{C}{4}\left|\textbf{P}\right|^4\right) \right]\mathrm{d}\bar{A}.
	\label{energy_b}
\end{equation}
where $\bar{\Omega}=\left[ -1,1\right]\times \left[-1,1 \right]$ is the rescaled domain, $\mathrm{d}\bar{A}$ is the rescaled area element.
For brevity, we drop all the ``bars" for the rest of this section. $\left|\textbf{P}\right|$ is defined by the inner product of the tensor, $\left<\P_1,\P_2\right>=\mathrm{tr}(\P_1\P_2)$, $\left|\textbf{P}\right|=\sqrt{\left<\P,\P\right>}$. 

The corresponding Euler-Lagrange equations are
\begin{equation}
	\begin{aligned}
	&\mathcal{L}_b P_{11}=\lambda^2  \left(P_{11}^2+P_{12}^2-\frac{B^2}{4C^2}\right)P_{11}, \\
	&\mathcal{L}_b P_{12}=\lambda^2 \left(P_{11}^2+P_{12}^2-\frac{B^2}{4C^2}\right)P_{12},
	\end{aligned}
	\label{EL_b}
\end{equation}
where $\mathcal{L}_b = \frac{1}{b^2} \partial_{xx}+ \partial_{yy}$ is a strongly elliptic differential operator.


\begin{proposition}\label{pro:2}
	For any $b\geqslant 1$ and $\lambda\geqslant 0$, the energy functional in \eqref{energy_b} has a critical point $\P\in \mathcal{A}$,  which satisfies $P_{12}\equiv 0$.
\end{proposition}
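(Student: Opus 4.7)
The plan is to apply the direct method on the natural invariant subspace $\{P_{12}\equiv 0\}\subset\mathcal{A}$, and then argue that any minimizer there is automatically a critical point of the full functional. Since the Dirichlet data in \eqref{bc} are diagonal, the class
\[
\mathcal{A}_0:=\{\P\in \mathcal{A}:P_{12}\equiv 0\}
\]
is nonempty (e.g.\ take the harmonic extension of $(\P_{bc})_{11}$ together with $P_{12}=0$). On $\mathcal{A}_0$ one has $\mathrm{tr}\,\P^2=2P_{11}^2$, so the rescaled energy \eqref{energy_b} reduces to a scalar Allen--Cahn-type functional of $P_{11}\in H^1(\Omega)$ alone.

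First I would prove existence of a minimizer of this reduced functional over $\mathcal{A}_0$. The quartic bulk term is coercive and dominates the indefinite quadratic piece, giving a uniform $H^1$ bound on any minimizing sequence. The gradient part is convex and hence weakly lower semi-continuous in $H^1$, while the bulk term, though non-convex, is continuous along strongly $L^4$-convergent sequences---and such strong convergence is supplied in two dimensions by the Rellich--Kondrachov embedding $H^1(\Omega)\hookrightarrow L^4(\Omega)$. Extracting a weakly convergent subsequence with the correct trace then yields a minimizer $P_{11}^\ast\in H^1(\Omega)$.

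Next I would promote $\P^\ast:=(P_{11}^\ast,0)$ to a critical point of $E$ on $\mathcal{A}$ by verifying both equations of \eqref{EL_b}. Since $P_{11}^\ast$ minimizes the reduced functional, a direct computation (using $|\P|^2=2P_{11}^2$ on $\mathcal{A}_0$) shows that it satisfies the first equation of \eqref{EL_b} with $P_{12}=0$ substituted. The second equation, $\mathcal{L}_b P_{12}=\lambda^2(P_{11}^2+P_{12}^2-B^2/(4C^2))P_{12}$, is homogeneous in $P_{12}$ and hence trivially satisfied at $\P^\ast$; equivalently, every term in the first variation of $E$ in the $P_{12}$-direction at $\P^\ast$ carries a factor of $P_{12}^\ast\equiv 0$. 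The hard part is not really hard: coercivity and weak lower semi-continuity in two dimensions are standard, and the one point deserving care is checking that the factor of $2$ coming from $\mathrm{tr}\,\P^2=2P_{11}^2$ in the bulk density is matched by the corresponding factor in the Frobenius-norm elastic density, so that the reduced Euler--Lagrange equation coincides with the $P_{11}$-equation of \eqref{EL_b} at $P_{12}=0$.
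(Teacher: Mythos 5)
Your proposal is correct and follows essentially the same route as the paper: restrict to the diagonal (i.e., $P_{12}\equiv 0$) class, minimize the resulting scalar functional by the direct method, and observe that the $P_{12}$-equation of \eqref{EL_b} is automatically satisfied because it is homogeneous in $P_{12}$. The paper simply outsources the direct-method details to Proposition 3.2 of \cite{canevari2017order}, whereas you spell out coercivity, weak lower semi-continuity, the Rellich--Kondrachov compactness for the quartic term, and the factor-of-two bookkeeping explicitly.
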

\begin{proof}
   This proof is a direct consequence of Proposition 3.2 in \cite{canevari2017order}. The rLdG free energy \eqref{energy_b} has a branch of solutions given by 
   \begin{equation}
	   \P =
	   \left(
	   \begin{tabular}{cc}
		   $P_{11}$  & $0$ \\
		   $ 0$  & $-P_{11}$ 
	   \end{tabular}
	   \right),
	   \label{P12_0}
   \end{equation}
   where $P_{11}$ is defined as minimizer of 
   \begin{equation}
	   H(p)=\int_{\Omega} \frac{1}{b^2}p_{x}^2+p_{y}^2+\frac{\lambda^2}{2C}\left(-\frac{B^2}{2C}p^2+C p^4\right).
	   \label{H_energy}
   \end{equation}
   subject to the Dirichlet conditions \eqref{bc}.
   The existence of the minimizer follows from the direct methods in the calculus of variations, since \eqref{H_energy} is both coercive and weakly lower semi-continuous \cite{canevari2017order}.
   Then $P_{11}$ is a classical solution of the associated Euler-Lagrange equation
   \begin{equation}
	   \mathcal{L}_b p=\lambda^2  (p^2-\frac{B^2}{4C^2})p,
   \end{equation} 
   which ensures that \eqref{P12_0} is a solution of \eqref{EL_b}.
\end{proof}

\begin{proposition}\label{pro:1}
	For any $B, C>0$, $b\geqslant 1$ and $0\leqslant \lambda < \frac{C}{2B}$, the rLdG energy \eqref{energy_b} has a unique critical point $\P \in \mathcal{A}$, which satisfies $P_{12} \equiv 0$.
\end{proposition}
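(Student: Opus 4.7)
The plan is to combine Proposition \ref{pro:2}, a pointwise maximum-principle bound, and a Poincaré-type energy estimate on the off-diagonal component. Throughout I would work on the rescaled domain $\bar{\Omega}=[-1,1]^2$ with the Euler--Lagrange system \eqref{EL_b}, dropping bars as in the surrounding text. Since Proposition \ref{pro:2} already supplies a critical point with $P_{12}\equiv 0$, it suffices to exclude any other critical point under the smallness hypothesis.

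First I would establish the a priori bound $\phi:=P_{11}^2+P_{12}^2 \leq B^2/(4C^2)$ throughout $\bar{\Omega}$ for every critical point $\P\in\mathcal{A}$. Multiplying the first equation in \eqref{EL_b} by $P_{11}$, the second by $P_{12}$, and adding gives
\[
\mathcal{L}_b\phi \;=\; 2\lambda^2\phi\bigl(\phi-B^2/(4C^2)\bigr)+2\bigl(\tfrac{1}{b^2}(P_{11,x}^2+P_{12,x}^2)+P_{11,y}^2+P_{12,y}^2\bigr).
\]
If $\phi$ exceeded $B^2/(4C^2)$ at some interior maximum, then $\mathcal{L}_b\phi>0$ there, contradicting the maximum principle for the strongly elliptic operator $\mathcal{L}_b$. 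Since the boundary data \eqref{bc} satisfy $\phi|_{\partial\bar{\Omega}}=(B/(2C))^2 T_\epsilon^2\leq B^2/(4C^2)$, the bound follows.

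Next I would show that $P_{12}\equiv 0$. Since $P_{12}$ vanishes on $\partial\bar{\Omega}$, testing the second equation of \eqref{EL_b} against $P_{12}$ and integrating by parts yields
\[
\int_{\bar{\Omega}}\bigl(\tfrac{1}{b^2}P_{12,x}^2+P_{12,y}^2\bigr)\,dA \;=\; \lambda^2\int_{\bar{\Omega}}\bigl(B^2/(4C^2)-\phi\bigr)P_{12}^2\,dA \;\leq\; \tfrac{\lambda^2 B^2}{4C^2}\int_{\bar{\Omega}}P_{12}^2\,dA,
\]
where the last inequality uses the a priori bound. A Poincaré inequality $\int P_{12}^2 \leq C_P\int\bigl(\tfrac{1}{b^2}P_{12,x}^2+P_{12,y}^2\bigr)$ on $H^1_0(\bar{\Omega})$---with a constant $C_P$ uniform in $b\geq 1$, since the $y$-derivative alone controls the $L^2$ norm on the fixed strip---then combines with $\lambda<C/(2B)$ to force $P_{12}\equiv 0$. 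With $P_{12}\equiv 0$, uniqueness reduces to the scalar problem of minimizing $H[p]$ in \eqref{H_energy}, which follows from the strict convexity of $H$: its second variation combines a coercive Dirichlet term with a perturbation controlled by $\lambda^2 B^2/C^2$, and the same Poincaré--smallness argument ensures the Dirichlet term dominates.

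The principal obstacle is the vanishing of $P_{12}$: without the pointwise bound of the first step, the zero-order coefficient $\lambda^2(B^2/(4C^2)-\phi)$ in the $P_{12}$ equation could change sign and no smallness hypothesis on $\lambda$ alone would close the energy test. Tracking the Poincaré constant to recover the explicit threshold $\lambda<C/(2B)$ is then routine bookkeeping, and in fact the argument comfortably covers this range.
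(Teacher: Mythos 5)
Your argument is correct and reaches the stated conclusion (in fact on the slightly wider range $\lambda<C/B$), but it is organized differently from the paper's proof. Both proofs share the same two ingredients: the maximum-principle bound $|\P|^2\leqslant B^2/(2C^2)$ (equivalently your $P_{11}^2+P_{12}^2\leqslant B^2/(4C^2)$) and the anisotropic Poincar\'e inequality $\Vert\cdot\Vert_{L^2}^2\leqslant 4\Vert\partial_y(\cdot)\Vert_{L^2}^2$ on the rescaled square, which is what makes the constants uniform in $b\geqslant 1$. The paper, following Lamy's Lemma 8.2, then proves in one stroke that the full vector-valued energy $E$ is strictly midpoint-convex on the convex set $\mathcal{S}=\{\P\in\mathcal{A}:|\P|^2\leqslant B^2/(2C^2)\}$, comparing $E(\tfrac{\P+\bar{\P}}{2})$ with $\tfrac12 E(\P)+\tfrac12 E(\bar{\P})$ and absorbing the bulk defect into $-\tfrac18\Vert(\P-\bar{\P})_y\Vert^2$; uniqueness of the critical point follows at once, and $P_{12}\equiv 0$ is then inherited from the explicit solution of Proposition~\ref{pro:2}. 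You instead first kill the off-diagonal component of \emph{every} critical point by testing its Euler--Lagrange equation against $P_{12}\in H^1_0$ and invoking the same Poincar\'e--smallness mechanism, and only then prove uniqueness for the reduced scalar functional $H$ via positivity of its second variation. Your decomposition makes the structural statement $P_{12}\equiv 0$ transparent rather than a corollary of uniqueness, and the quartic term only helps the scalar convexity, so no bound on $p$ is needed in the second step; the paper's route is more compact and generalizes verbatim to situations where no component can be eliminated a priori. One cosmetic point: at $\lambda=0$ your interior-maximum contradiction degenerates (the reaction term vanishes), but there the system is linear and uniqueness is immediate, so the endpoint of the stated range is still covered.
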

\begin{proof}
The existence of the critical point of \eqref{energy_b} in the admissible space $\mathcal{A}$ is proved in Proposition \ref{pro:2}.

We follow the uniqueness criterion argument in Lemma 8.2 of \cite{lamy2014bifurcation}. For any $B$, $C>0$ and $b\geqslant 1$, if $\P\in \mathcal{A}$ is a critical point of the rLdG energy \eqref{energy_b}, then $\P$ is bounded.
This is an immediate consequence of the maximum principle. By replacing operator $\nabla$ with $\mathcal{L}_b$ and following the calculations in the Lemma B.3. of \cite{lamy2014bifurcation}, we have $|\P|^2 \leqslant \frac{B^2}{2C^2}$. We define the convex set $\mathcal{S}=\{\P\in \mathcal{A}, |\P|^2\leqslant \frac{B^2}{2C^2}\}$. 

Then, we prove that $E$ is strictly convex on $\mathcal{S}$.
For any $\P, \bar{\P} \in \mathcal{S}$, we have
\begin{equation}
	\begin{aligned}
	&E(\frac{\P+\bar{\P}}{2})-\frac{1}{2}E(\P)-\frac{1}{2}E(\bar{\P}) \\
	&= \int_{\Omega} - \frac{1}{8b^2}|(\P-\bar{\P})_x|^2-\frac{1}{8}|(\P-\bar{\P})_y|^2 \mathrm{d}A + \int_{\Omega} f(\frac{\bar{\P}+\P}{2})-\frac{1}{2}f(\P)-\frac{1}{2}f(\bar{\P})\mathrm{d}A.
	\end{aligned}
\label{E_convex}
\end{equation}
where $f(\P)$ is the bulk energy density of \eqref{energy_b}, i.e.,
\begin{equation}
	f(\P)=\frac{\lambda^2}{2C}\left(-\frac{B^2}{4C}\left|\P\right|^2+\frac{C}{4}\left|\P\right|^4\right).
\end{equation}

For any point $(\hat{x},\hat{y})\in \Omega$, we have
\begin{equation}
	 (P_{1i}-\bar{P}_{1i}) (\hat{x},\hat{y}) =\int_{-1}^{\hat{y}}(P_{1i}-\bar{P}_{1i})_y(\hat{x},y)\mathrm{d}y.
\end{equation}
By using Schwarz inequation, we have
\begin{equation}
	 (P_{1i}-\bar{P}_{1i})^2(\hat{x},\hat{y})  \leqslant |\hat{y}+1| \int_{-1}^{\hat{y}}{(P_{1i}-\bar{P}_{1i})}_y ^2(\hat{x},y) \mathrm{d}y \leqslant 2 \int_{-1}^{1} (P_{1i}-\bar{P}_{1i})_y ^2(\hat{x},y) \mathrm{d}y.
\end{equation}
Integrating the both sides on $\Omega$, we have
\begin{equation}
	\int_{\Omega} (P_{1i}-\bar{P}_{1i}) ^2(\hat{x},\hat{y}) \mathrm{d}\hat{x}\mathrm{d}\hat{y}\leqslant 4\int_{\Omega} (P_{1i}-\bar{P}_{1i})_y^2(x,y) \mathrm{d}x\mathrm{d}y,
	\label{eq: poincare_pre}
\end{equation}
i.e., the Poincare inequality
\begin{equation}
	\Vert \P-\bar{\P}\Vert_{L_{\Omega}^2}^2\leqslant 4\Vert (\P-\bar{\P})_y\Vert_{L_{\Omega}^2}^2,
	\label{eq: poincare}
\end{equation}
where we define the $L^2$-norm as $\Vert \P\Vert_{L_{\Omega}^2}= \left({\int_{\Omega} |\P|^2}\text{d}A\right)^{\frac{1}{2}}$. The rationality of exchanging the order of integration in \eqref{eq: poincare_pre} can be obtained from the density of $C_0^\infty(\Omega)$ in $H_0^1(\Omega)$, i.e., we can assume $P_{1i}-\bar{P}_{1i}\in C_0^\infty(\Omega)$.


We estimate the bound of the second integral in \eqref{E_convex}. Since $|\P|^2|\bar{\P}|^2-\left<\P,\bar{\P}\right>^2\geqslant 0$, $|\P|^4$ is a convex function on $\mathcal{S}$ and $|\P|^2,|\bar{\P}|^2,\left<\P,\bar{\P}\right>\leqslant \frac{B^2}{2C^2}$, we have
\begin{equation}
\left|f(\frac{\P+\bar{\P}}{2})-\frac{1}{2}f(\P)-\frac{1}{2}f(\bar{\P})\right|\leqslant \frac{B^2\lambda^2}{8C^2}|\P-\bar{\P}|^2.
\end{equation}
By using the Poincare inequality in \eqref{eq: poincare}, we have
\begin{equation}
||f(\frac{\P+\bar{\P}}{2})-\frac{1}{2}f(\P)-\frac{1}{2}f(\bar{\P})||_{L_{\Omega}^2}\leqslant \frac{B^2\lambda^2}{2C^2}\Vert (\P-\bar{\P})_y\Vert_{L_{\Omega}^2}^2.
\end{equation}

Thus, for $\lambda<\lambda_0:=\frac{C}{2B}$, the energy functional $E$ in \eqref{energy_b} is strictly convex on $\mathcal{S}$ and has a unique critical point,
since $\forall \P,\ \bar{\P} \in \mathcal{S}$, and $\P \neq \bar{\P}$,
\begin{equation}
	E(\frac{\P+\bar{\P}}{2})-\frac{1}{2}E(\P)-\frac{1}{2}E(\bar{\P})\leqslant - \frac{1}{8}\Vert (\P-\bar{\P})_y\Vert_{L_{\Omega}^2}^2+ \frac{B^2\lambda^2}{2C^2}\Vert (\P-\bar{\P})_y\Vert_{L_{\Omega}^2}^2 < 0.
\end{equation}
\end{proof}

\begin{corollary}
	If the domain size $S=4\lambda^2 b$ is fixed as a constant, there exists a $b_0(B,C)=\max\left(1,\frac{SB^2}{C^2}\right)$ such that, for any $b > b_0$, i.e., a long rectangular domain, \eqref{energy_b} has a unique critical point $\P\in \mathcal{A}$, which satisfies $P_{12}\equiv 0$. 
\end{corollary}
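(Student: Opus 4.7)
The plan is to deduce the corollary as an immediate consequence of Proposition \ref{pro:1}, by translating the smallness condition on $\lambda$ into a largeness condition on $b$ under the constraint that the rescaled area $S = 4\lambda^2 b$ is held constant. Recall that Proposition \ref{pro:1} guarantees a unique critical point $\P \in \mathcal{A}$ with $P_{12}\equiv 0$ whenever $b \geqslant 1$ and $\lambda < C/(2B)$, so the whole task is to identify the range of $b$ for which this threshold on $\lambda$ is automatically satisfied.

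First, I would solve the relation $S = 4\lambda^2 b$ for $\lambda$, giving $\lambda = \sqrt{S/(4b)}$, and then substitute into the inequality $\lambda < C/(2B)$. A short algebraic manipulation turns this into $b > SB^2/C^2$. Consequently, any $b$ strictly larger than $SB^2/C^2$ forces $\lambda$ to be below the uniqueness threshold of Proposition \ref{pro:1}.

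Next, since Proposition \ref{pro:1} also requires the hypothesis $b \geqslant 1$ built into the rectangular geometry, I would set
\begin{equation*}
  b_0(B,C) := \max\!\left(1,\ \frac{S B^2}{C^2}\right),
\end{equation*}
so that $b > b_0$ simultaneously ensures $b \geqslant 1$ and $\lambda < C/(2B)$. Applying Proposition \ref{pro:1} at these parameter values then yields a unique critical point of \eqref{energy_b} in $\mathcal{A}$, and this critical point must coincide with the one produced by Proposition \ref{pro:2}, which has $P_{12}\equiv 0$.

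Because the argument is purely a substitution into an already established uniqueness result, there is no genuine obstacle; the only thing to be careful about is the direction of the inequalities when passing from $\lambda < C/(2B)$ to $b > SB^2/C^2$, and the need to take the maximum with $1$ so that the geometric assumption $b \geqslant 1$ used throughout Section \ref{sec: theoretical analysis} is not violated for small domain sizes $S$.
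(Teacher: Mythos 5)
Your proposal is correct and follows exactly the paper's own (one-line) argument: substitute $\lambda=\sqrt{S/(4b)}$ into the threshold $\lambda<C/(2B)$ of Proposition \ref{pro:1}, obtaining $b>SB^2/C^2$, and take the maximum with $1$ to preserve the standing assumption $b\geqslant 1$. Nothing further is needed.
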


This corollary can be proved by Proposition \ref{pro:1} with $\lambda=\sqrt{\frac{S}{4b}}$. 

\begin{proposition}
Let $\P\in \mathcal{A}$ be a solution of rLdG Euler-Lagrange equations \eqref{EL_b}, for $b>1$ and $\lambda>0$, subject to the boundary condition \eqref{bc}, then $\P$ uniformly converges to the unique solution $\P_0$ of  
 \begin{equation}\label{eq:zero}
	\begin{aligned}
	\mathcal{L}_b (P_0)_{11}=0,\ on\ \Omega, \\
	\mathcal{L}_b (P_0)_{12}=0,\ on\ \Omega,
    \end{aligned}
\end{equation}
subject to the same boundary condition, as $\lambda\to 0$ or $b\to\infty$ with the fixed domain size $S$. The error estimate is 
 \begin{equation}
	 \Vert \P-\P_0\Vert_{L^\infty_{\Omega}}\leqslant \hat{C}\lambda^2,
	 \label{estimate}
 \end{equation}
 for a positive constant $\hat{C}(B,C)$ independent of $\lambda$ and $b$.
 The $L^\infty$-norm is defined as $\Vert \P\Vert_{L_{\Omega}^\infty}=\text{ess sup}_{\r\in \Omega}\left|\P(r)\right|$.

\end{proposition}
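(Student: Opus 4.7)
The plan is to linearize the problem by setting $\mathbf{W} = \P - \P_0$ and subtracting \eqref{eq:zero} from \eqref{EL_b}. Since $\P$ and $\P_0$ share the Dirichlet data \eqref{bc}, each component of $\mathbf{W}$ satisfies
\begin{equation*}
  \mathcal{L}_b W_{ab} = \lambda^2 \Bigl(|\P|^2 - \tfrac{B^2}{4C^2}\Bigr) P_{ab} \quad \text{on } \Omega, \qquad W_{ab} = 0 \quad \text{on } \partial\Omega,
\end{equation*}
for $(a,b)\in\{(1,1),(1,2)\}$. The estimate \eqref{estimate} therefore reduces to a purely linear statement: solutions of $\mathcal{L}_b W = F$ with zero Dirichlet data must obey $\|W\|_{L^\infty}\leqslant \hat{C}\|F\|_{L^\infty}$ with $\hat{C}$ independent of $b$, applied to an $F$ of $L^\infty$-size $O(\lambda^2)$.

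For the size of $F$, I would invoke the uniform bound $|\P|^2 \leqslant B^2/(2C^2)$ already used in the proof of Proposition \ref{pro:1} (adapted from Lemma B.3 of \cite{lamy2014bifurcation} with $\mathcal{L}_b$ in place of $\Delta$); this immediately yields $\|F\|_{L^\infty} \leqslant \lambda^2 K(B,C)$ with $K$ depending only on the material parameters. Existence and uniqueness of $\P_0$ is then standard for the strongly elliptic linear system \eqref{eq:zero} with continuous boundary data.

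For the $b$-uniform elliptic estimate, rather than appealing to a generic bound whose constant could degenerate as $b\to\infty$, I would exhibit an explicit barrier depending only on $y$: the function $\phi(y) = (1-y^2)/2$ satisfies $\mathcal{L}_b\phi = \partial_{yy}\phi = -1$ independently of $b$ because $\partial_{xx}\phi\equiv 0$, with $\phi\geqslant 0$ on $\overline{\Omega}$ and $\phi=0$ on $y=\pm 1$. Applying the weak maximum principle for $\mathcal{L}_b$ to the functions $W_{ab}\pm \lambda^2 K\phi$ then produces $|W_{ab}|\leqslant \tfrac12 \lambda^2 K$ pointwise, which is the desired estimate with $\hat{C}=K/2$, independent of both $\lambda$ and $b$. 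The $b\to\infty$ regime with fixed domain size $S=4b\lambda^2$ follows at once by writing $\lambda^2 = S/(4b)\to 0$, so both limits are treated by the same estimate.

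The main obstacle I anticipate is precisely this $b$-independence of the maximum-principle constant, since the ellipticity constant $1/b^2$ of $\mathcal{L}_b$ in the $x$-direction collapses as $b\to\infty$. The one-dimensional barrier $\phi(y)$, annihilated by $\partial_{xx}$ and hence insensitive to the $1/b^2$ coefficient, is what allows the argument to proceed uniformly in both limits.
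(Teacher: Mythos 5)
Your proposal is correct and follows essentially the same route as the paper: subtract the limiting linear system from \eqref{EL_b}, bound the right-hand side by $O(\lambda^2)$ using the a priori bound $|\P|^2\leqslant B^2/(2C^2)$, and then apply the weak maximum principle with a barrier depending only on $y$ so that the constant does not degenerate as $b\to\infty$. The only difference is cosmetic: the paper's barrier is $(e^2-e^{y+1})\hat{C}_2\lambda^2$ rather than your $(1-y^2)/2$, and both serve the identical purpose of being annihilated by $\frac{1}{b^2}\partial_{xx}$.
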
 
\begin{proof}
By Proposition 13 in \cite{majumdar2010landau}, we have that $\P,\P_0 \in C^\infty(\Omega;S_0)$. Let $v_i=P_{1i}-(P_0)_{1i}$. Due to the bound of $|\P|$ in Proposition \ref{pro:1} and equation \eqref{EL_b}, we have
 \begin{equation}
	 |\mathcal{L}_b v_i| \leqslant \hat{C}_2(B,C) \lambda^2\ on\ \Omega ,\ v_i = 0\ on\ \partial \Omega,\ i=1,2.
 \end{equation}
We set an auxiliary function
\begin{equation}
	\tilde{v}_i=v_i-(e^2-e^{y+1})\hat{C}_2(B,C)\lambda^2,\ i=1,2,
\end{equation}
which satisfies
\begin{equation}
	\mathcal{L}_b \tilde{v}_i=\mathcal{L}_b v_i+e^{y+1}\hat{C}_2(B,C)\lambda^2  \geqslant 0 \ on\ \Omega,\ \tilde{v}_i \leqslant 0 \ on\ \partial\Omega,\ i=1,2.\\
\end{equation}
From the weak maximum principle, $(e^2-e^{y+1})\hat{C}_2(B,C)\lambda^2$ is the super-solution for $v_i, i=1,2$, thus, $v_i\leqslant (e^2-1)\hat{C}_2(B,C)\lambda^2, i=1,2$. Similarly, we have $-v_i\leqslant (e^2-1)\hat{C}_2(B,C)\lambda^2, i=1,2$.
By taking $\hat{C}(B,C)=2(e^2-1)\hat{C}_2(B,C)$, we can get the error estimate in \eqref{estimate}.
\end{proof}

\section{Numerical method} \label{Sec Numerical method}
In order to better visualize the rectangle, we apply the following rLdG energy which can be obtained with $\bar{x}=bx$ in \eqref{energy_b}.

\begin{equation}
	E(\P)=\int_\Omega \frac{1}{2}\left |\nabla \textbf{P}\right|^2+\frac{\lambda^2}{2C}\left(-\frac{B^2}{4C}\left|\textbf{P}\right|^2+\frac{C}{4}\left|\textbf{P}\right|^4\right)\text{dx},
	\label{energy_final}
\end{equation}
where $\Omega=\left[ -b,b \right]\times \left[-1,1 \right]$.

\subsection{Saddle dynamics method}
In order to construct the solution landscape and bifurcation diagram, the saddle dynamics (SD) method is designed to search unstable saddle points with a given index \cite{2019High}. Here, we explain the essential steps in the SD method to find an index-$k$ saddle point for the rLdG energy \eqref{energy_final}. A non-degenerate index-$k$ saddle point $\hat{\P}=(\hat{P}_{11},\hat{P}_{12})$ has the following property. The Hessian $\nabla ^2 E(\hat{\P})$ has exact $k$ negative eigenvalues $\lambda_1 \leqslant \cdots \leqslant \lambda_k$, corresponding to $k$ unit eigenvectors $\hat{\_v}_1,\cdots,\hat{\_v}_k$ satisfying $\big\langle{\hat{\_v}_i}, \hat{{\_v}}_j \big\rangle = \delta_{ij}$, $1\leqslant i, j \leqslant k$. By setting the $k$-dimensional subspace $\mathcal{V}=\text{span} \big \{ \hat{\_v}_1,\cdots,\hat{\_v}_k \big \}$, $\hat{\P}$ is a local maximum on $\hat{\P}+\mathcal{V}$ and a local minimum on $\hat{\P}+\mathcal{V}^\perp$, where $\mathcal{V}^\perp$ is the orthogonal complement space of $\mathcal{V}$. 

The SD dynamics for an index-$k$ saddle point ($k$-SD) is given by,
\begin{equation}
    \left\{
    \begin{aligned}
		\dot{\P}&=- (\I-2\sum_{i=1}^k {\_v}_i{\_v}_i^T)\nabla E(\P), \\
		  \dot{\_v}_i&=-   (\I-{\_v}_i{\_v}_i^T-\sum_{i=1}^{i-1}2{\_v}_j{\_v}_j^T)\nabla^2 E(\P) \_v_i,\ i=1,2,\cdots,k ,\\
    \end{aligned}
    \right.
	\label{eq: SD}
\end{equation}
where $\I$ is the identity operator. For the rLdG energy \eqref{energy_final},
\begin{equation}
	\nabla E(\P)=
	2\left(
	\begin{tabular}{c}
		$-\Delta P_{11}+\lambda^2 (P_{11}^2+P_{12}^2-\frac{B^2}{4C^2})P_{11}$   \\
		$-\Delta P_{12}+\lambda^2 (P_{11}^2+P_{12}^2-\frac{B^2}{4C^2})P_{12}$   
	\end{tabular}
	\right).
\end{equation}
To avoid evaluating the Hessian of $E(\P)$, the dimer
\begin{equation}
    h(\P,\_v_i)=-\frac{\nabla E(\P-l\_v_i)-\nabla E(\P+l\_v_i)}{2l}
\end{equation}
is an approximation of $\nabla ^2 E(\P)\_v_i$, with a small dimer length $2l$. The dynamics for $\P$ in \eqref{eq: SD} can be written as
\begin{equation}
	\begin{aligned}
	\dot{\P}&=\left(\I-\sum_{i=1}^k {\_v}_i{\_v}_i^T\right) \left(-\nabla E(\P)\right)+ \left(\sum_{i=1}^k {\_v}_i{\_v}_i^T\right) \nabla E(\P) \\
	&= \left( \I-\mathcal{P}_{\mathcal{V}}  \right)\left(-\nabla E(\P)\right)+ \mathcal{P}_{\mathcal{V}} \left(\nabla E(\P)\right),
\end{aligned}
\end{equation}
where $\mathcal{P}_{\mathcal{V}}\nabla E(\P)=\left(\sum_{i=1}^k {\_v}_i{\_v}_i^T\right)\nabla E(\P)$ is the orthogonal projection of $\nabla E(\P)$ on $\mathcal{V}$. Thus, $\left( \I-\mathcal{P}_{\mathcal{V}}  \right)\left(-\nabla E(\P)\right)$ is a descent direction on $\mathcal{V}^\perp$, and $\mathcal{P}_{\mathcal{V}} \left(\nabla E(\P)\right)$ is an ascent direction on $\mathcal{V}$. 

The dynamics for $\_v_i, i=1,2,\cdots,k$ in \eqref{eq: SD} can be obtained by minimizing the $k$ Rayleigh quotients simultaneously with the gradient type dynamics,
\begin{equation}
	\min_{{\_v}_i}\text{  }\left<\_v_i, \nabla ^2 E(\P)\_v_i\right>,\ \text{s.t.}\ \left<\_v_i,\_v_j\right>=\delta_{ij}, \ j=1,2,\cdots,i,
\end{equation}
which renews the subspace $\mathcal{V}$ by finding the eigenvectors corresponding to the smallest $k$ eigenvalues of $\nabla^2 E (\P)$.

If the linear steady state of \eqref{eq: SD} is $(\P^*,\_v_1^*,\cdots,\_v_k^*)$, then $\P^*$ is a $k$-index saddle point of $E(\P)$, and $\_v_i^*, i=1,\cdots, k$, is an eigenvector of $\nabla^2 E(\P^*)$ corresponding to the $i$-th smallest eigenvalue \cite{2019High}. A stable state $\hat{\P}$ is a critical point of $E(\P)$ and the smallest eigenvalue of $\nabla^2 E(\hat{\P})$ is positive. 

We use finite difference methods to estimate the spatial derivation by taking the nodes $(x_i,y_j),i=0,1,\cdots,N_1,j=0,1,\cdots,N_2$ with the step length $h=\frac{1}{50}$, where
\begin{equation}
	\begin{aligned}
		-1=x_0 \leqslant x_1\leqslant \cdots\leqslant x_{N_1-1}\leqslant x_{N_1}=1,\ x_i=-1+ih,\\
		-b=y_0\leqslant y_1\leqslant \cdots\leqslant y_{N_2-1}\leqslant y_{N_2}=b,\ y_j=-b+jh.
	\end{aligned}
\end{equation}

\subsection{Construction of the solution landscape and bifurcation diagram}
Following the SD dynamics \eqref{eq: SD}, we construct the solution landscape by two algorithms: the downward search that enables us to search for connected lower-index saddle points; the upward search to find the higher-index saddle points \cite{yin2020construction}.

First, the downward search is used to search possible lower-index saddle points starting from an existing parent state (the highest-index saddle point) by following its unstable directions. We assume that an index-$k$ saddle $\P$ and eigenvectors $\_v_i,i=1,2,\cdots,k$ corresponding to the $i$-th smallest eigenvalue of the Hessian at $\P$ are provided. To search for a lower index-$m$ ($m<k$) saddle point, we choose an unstable direction which is a linear combination of  $\_v_i,i=1,2,\cdots,k$ as the moving direction of the initial state, and $m$ other unstable eigenvectors as the initial directions of $m$-SD. A typical choice of initial condition for a downward search following an $m$-SD is $(\P \pm \epsilon \_v_{m+1},\_v_1,\_v_2,\cdots,\_v_m)$. The small driving force $\pm \epsilon \_v_{m+1}$ keep the system away from the index-$k$ saddle point $\P$. Normally, we can find a pair of index-$m$ saddle points, which corresponding to the positive and negative driving force. If a new parent state emerges or multiple parent states exist, then the upward search is used to find possible new parent state from a lower-index saddle point. We assume that an index-$m$ ($m<k$) saddle $\P$ and eigenvectors $\_v_i,i=1,2,\cdots,k$ corresponding to the $i$-th smallest eigenvalue of the Hessian at $\P$ are provided. To search for an index-$k$ saddle point, we choose a direction which is a linear combination of $k-m$ eigenvectors $\_v_i,i=m+1,\cdots,k$ as the moving direction of the initial state, and $\_v_i,i=1,\cdots.,k$ as initial eigendirections of $k$-SD. A typical choice of initial condition for a upward search following a $k$-SD is $(\P \pm \epsilon \_v_{m+1},\_v_1,\_v_2,\cdots,\_v_k)$.

By repeating the downward search and upward search, we systematically find all possible critical points, including both unstable saddle points and stable minima, and their connections. In the next section, we explore the solution landscapes with various values of $b$ and $\lambda$. The efficiency of this method is reflected in the fact that we find two new branches which are disconnected to the previously reported branches \cite{fang2020surface}. Furthermore, we can also construct the bifurcation diagrams by tracking the indices of the solutions in the solution landscapes. A change of the Morse index is a sign of the bifurcation and possible change of stability properties.

\section{Numerical results} 
\subsection{Typical solutions}

\begin{figure}[hbtp]
    \centering
    \includegraphics[width=\textwidth]{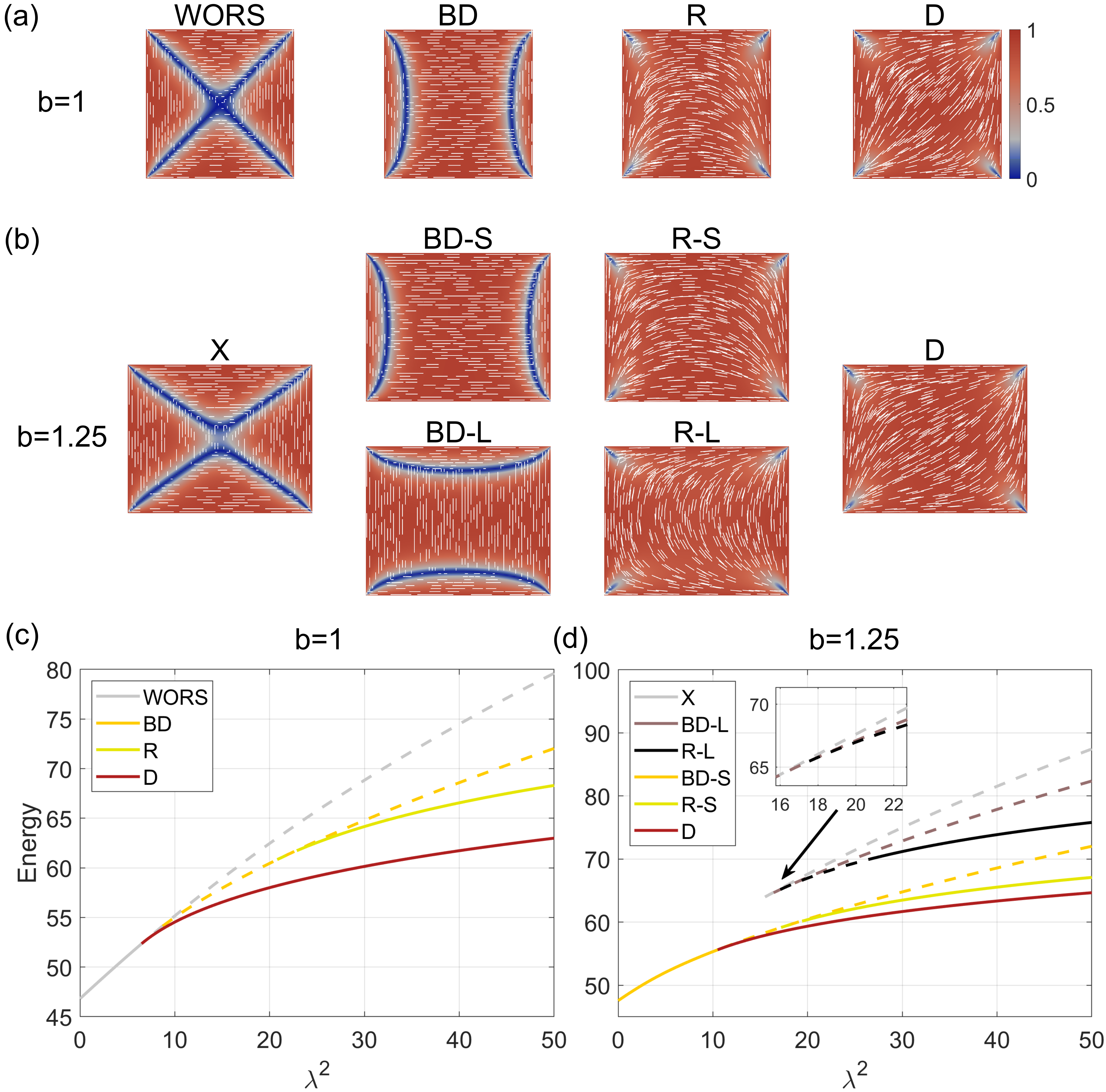}
	\caption{(a) Typical solutions on a square with $b = 1$ and $\lambda^2 = 50$. (b) Typical solutions on a rectangle with $b = 1.25$ and $\lambda^2 = 50$. (c-d) The plots of the rLdG energy \eqref{energy_final} versus $\lambda^2$ for some typical solutions with $b=1$ and $b=1.25$. The color denotes the nematic order $\left|\P\right|/\sqrt{2}$, and the white line encodes the direction of nematic defined in \eqref{eq: P-tensor}. We use the same visualization method for nematic configurations and the same color bar for nematic order in the following figures. The solid (dashed) line denotes a stable (unstable) critical solution.}
	\label{fig:typical}
\end{figure}

Several typical solutions on squares, such as WORS, BD, R, and D (Fig. \ref{fig:typical}(a)), which have been reported in some of the literature \cite{canevari2017order,robinson2017molecular,yin2020construction}. The WORS features a pair of mutually orthogonal line defects on the diagonals. It is a unique (stable) solution in a rLdG model for a small enough square, but loses its stability as the square edge length increases \cite{canevari2017order}. The Morse index of WORS increases as the domain size increases and is higher than other critical points. Thus, WORS is always chosen as a parent state to construct the solution landscape \cite{yin2020construction}. The BD solution is an unstable saddle point and features two symmetric line defects near a pair of opposite edges that partition the square into three regions. The nematic director of BD aligns along horizontally or vertically. When the domain size is large, the rotated (R) solution, for which the director is rotated by $\pi$ radians between a pair of opposite edges, and the diagonal (D) solution, for which the director is along a diagonal of the square, are both stable \cite{tsakonas2007multistable}.

Like these typical solutions on a square, we compute corresponding solutions on a rectangular domain in Fig. \ref{fig:typical}(b). On a rectangle, the parent state that is analogous to WORS is the X solution for a large $\lambda$. The X has line defects near the center region of rectangle but loses the cross structure of the WORS. Unlike the WORS, X exists only for a large enough $\lambda$ and is always unstable. The symmetry breaking of the rectangle brings about a loss of degeneracy between BD-L and BD-S, which are rotationally equivalent on the square. 
The BD-S solution, which features a pair of line defects along the two shorter edges of a rectangle, is a unique stable solution for small $\lambda$ and loses its stability as $\lambda$ increases. This is consistent with our theoretical analysis, i.e., BD-S is the unique critical point in Proposition \ref{pro:1} which satisfies $P_{12} \equiv 0$. It is notable that the BD solution is always unstable on a square domain \cite{yin2020construction}. Here, our analysis and numerical results show that, in a fixed rectangular domain size, we can stabilize the BD-S solution by increasing the aspect ratio $b$. This reveals how geometrical anisotropy can affect the stability of nematic equilibria. The BD-L solution, which features a pair of line defects along the two longer edges of the rectangle, is unstable and exists only for a large $\lambda$. We can distinguish between X and BD-L by comparing the distance between the line defects and the long side edges. Similarly, loss of geometrical symmetry brings two ``R-like" solutions, R-S and R-L, for which the nematic director is rotated by $\pi$ radians between two opposed short and long edges respectively. They are stable for large $\lambda$. The analogous D state on rectangle is still the global minimizer of (\ref{energy_final}) for large $\lambda$. However, the director of a D state on the diagonals aligns not strictly along one of the diagonals of the rectangle domain. The stable states on the rectangle, D, R-S, and R-L, have been reported in \cite{yao2018topological,fang2020surface,lewis2014colloidal}.

Next, we track the branches of these typical solutions on a square and investigate the bifurcation of them in Fig. \ref{fig:typical}(c). For small $\lambda^2$, the WORS is a unique solution and its smallest eigenvalue strictly decreases as the domain size \cite{canevari2017order}. As $\lambda^2$ increases, the WORS becomes unstable and bifurcates into an unstable WORS and a stable D state. As $\lambda^2$ continues to increase, the WORS further bifurcates into a new unstable BD branch. The R solutions are bifurcated from the BD and stabilized via a pitchfork bifurcation from index-1 saddles to minima.

An analogous bifurcation of typical solutions on a rectangle is shown in Fig. \ref{fig:typical}(d).
The BD-S is a unique solution on a rectangle when $\lambda^2$ is small. As $\lambda^2$ increases, BD-S becomes unstable and bifurcates into stable D solutions via pitchfork bifurcation. Then this index-1 saddle BD-S further bifurcates into unstable R-S solution, and the R-S solution gains stability with a large $\lambda^2$. The branches of unstable BD-L and X are unconnected to the BD-S branches and appear simultaneously with the same energy. The BD-L solution bifurcates into the R-L solution, and the R-L gains stability with a large $\lambda^2$. It is noteworthy that the energy of the BD-L solution with long line defects is much higher than that of the BD-S solution with short line defects and that the energy gap separates the six typical solutions into two solution families.

\subsection{Solution landscape at $b=1.25$}

\begin{figure}[hbtp]
    \centering
    \includegraphics[width=\textwidth]{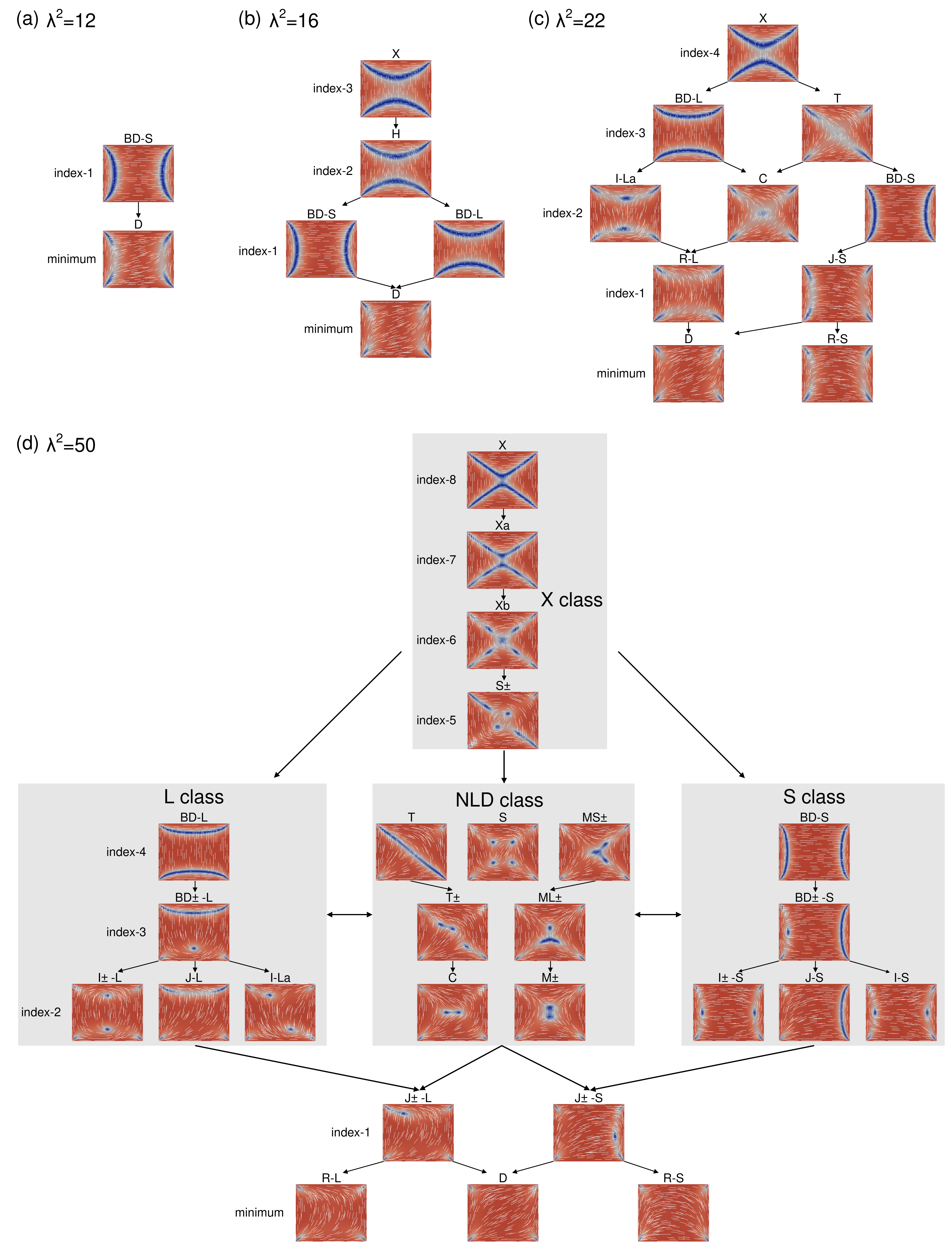}
    \caption{The solution landscapes at $b=1.25$ with (a) $\lambda^2=12$, (b) $\lambda^2=15.7$, (c) $\lambda^2=22$ and (d) $\lambda^2=50$. The arrow from the higher index saddle to the lower index saddle represents that the solution (in the arrow's tail) with a small perturbation that is able to converge to the solution in the arrow's head by following \eqref{eq: SD}. In the following solution landscapes, we represent the connections in the same way.}
	\label{b1.25}
\end{figure}

The solution landscapes for rectangles with a fixed aspect ratio $b=1.25$ are shown in Fig. \ref{b1.25}. At $\lambda^2=12$, only index-$1$ BD-S connects with index-$0$ D state (Fig. \ref{b1.25}(a)). At $\lambda^2=16$, as shown in Fig. \ref{b1.25}(b), index-$3$ X is the new parent state that connects with index-$2$ H. This index-2 H state features a pair of interior point defects, which are produced by interrupting the line defects of X. Following $1$-SD in \eqref{eq: SD} along unstable directions, H converges to two index-$1$ solutions BD-S and BD-L. Hence, there are two paths from the parent state X to the stable state D, which pass though BD-S and BD-L respectively. At $\lambda^2=22$, the indices of X and BD-L increase from index-3 to index-4 and from index-1 to index-3, respectively (Fig. \ref{b1.25}(c)). The index-$4$ X connects with two index-$3$ solutions BD-L and T, and further connects with three index-$2$ solutions I-La, C and BD-S. The C state features a central point defect for small $\lambda^2$, and this point defect splits into two point defects as $\lambda^2$ increases. The director of the C state is similar with R-L, i.e., rotated by $\pi$ radians between two long edges. Following $1$-SD along the unstable direction, I-La and C converge to an index-$1$ R-L; BD-S converges to an index-$1$ J-S. Finally, we can find two stable states, D and R-S, from index-$1$ R-L and J-S.

At $\lambda^2=50$, we have $26$ solutions with various indices, and the connections between them is shown in Fig \ref{b1.25}(d). The parent state is still the X state. R-L, D, and R-S are stable states. We classify the rest of the solutions into four classes: S class, L class, X class and NLD (No Line Defect) class, according to the location of the defects and the connections between them. Solutions in S class have defects near the short rectangular edges. The BD-S is the parent state of S class. Following the unstable eigendirection of the BD-S, one of the line defects splits into two $\pm 1/4$ point defects near the corners (the director defined in \eqref{eq: P-tensor} rotates by $\pm \pi/2$ radians anticlockwise around the defect core) and one $\pm 1/2$ point defect near the middle of the short edge and the 3-SD converges to index-$3$ BD$\pm$-S (the $\pm$ signs represent $\pm 1/2$ point defects). Following the unstable eigendirection of 
the BD$\pm$-S, another line defect is interrupted as a $\pm 1/2$ point defect (I$\pm$-S) or a $\mp 1/2$ point defect inside (I-S). J-S can be obtained by moving the $\pm 1/2$ point defect along the short edge and merging with the $\pm 1/4$ point defects near one of the corners. The structure of L class in the solution landscape is analogous to that of the S class. The BD-L with long line defects is the parent state.  We can obtain BD$\pm$-L when one of the BD-L's line defects splits into point defects, further, through the splitting of the other line defect, we can obtain I$\pm$-L and I-La. The two point defects in I-La deviate from the middle of the long edges and get closer to opposed short edges.

The solutions in X class locate the defects at the position of the line defects of X. In X class, the pairs of line defects of index-8 X splits into multiple point defects. Index-$7$ Xa, index-6 Xb, index-5 S$\pm$ inset with a tiny I-L, C, and D profile in the center, respectively. We classify all other solutions that have no preference for the position of defects as NLD class. X class can connect with L class, NLD class, and S class. The connection between L class and S class can be achieved through NLD class or X class. Hence, there are four paths from the parent state X to the stable state D: X class$\rightarrow$L class$\rightarrow$J$\pm$-L$\rightarrow$D, X class$\rightarrow$S class$\rightarrow$J$\pm$-S$\rightarrow$D, X class$\rightarrow$NLD class$\rightarrow$J$\pm$-L$\rightarrow$D, and X class$\rightarrow$NLD class$\rightarrow$J$\pm$-S$\rightarrow$D. Similarly, we have two paths from X to R-L(S) through L(S) class or NLD class. 

\begin{figure}[hbtp]
    \centering
    \includegraphics[width=.99\textwidth]{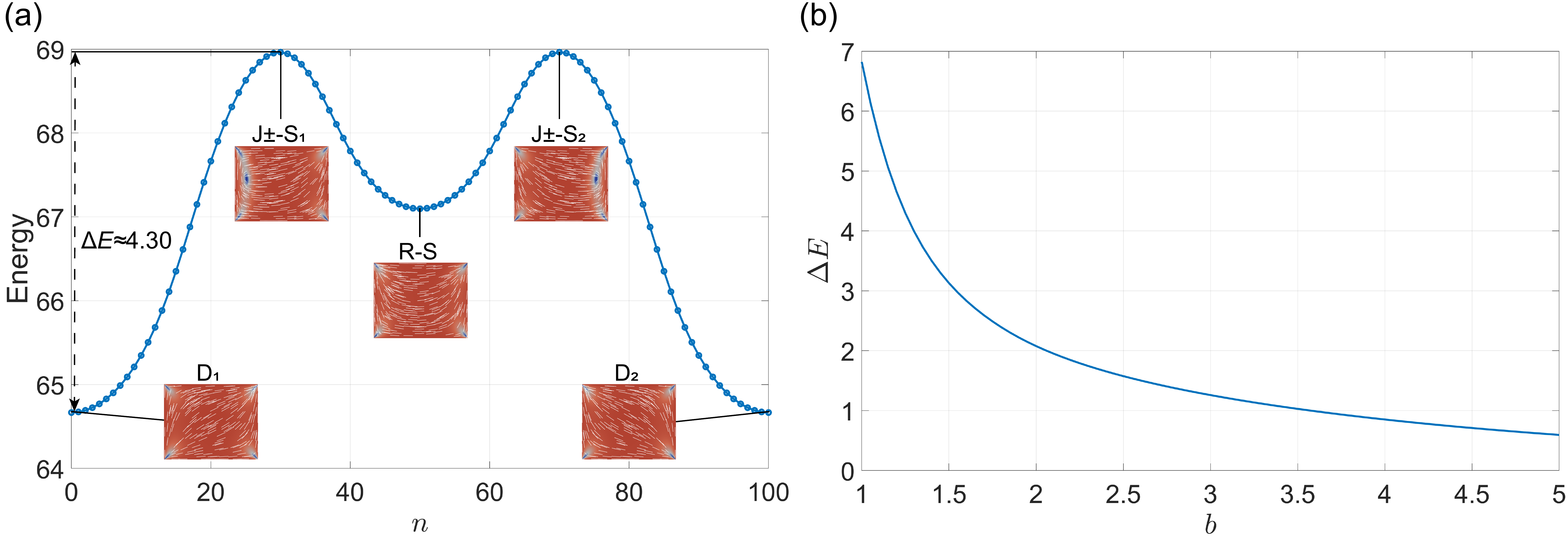}
    \caption{(a) Transition pathway between D$_1$ and D$_2$ on a rectangle with $b=1.25$ and $\lambda^2=50$ (b) For a fixed area size $S=250$, the energy barrier ($\Delta E=E_{\text{J}\pm \text{S}}-E_{\text{D}}$) from D to R-S versus the aspect ratio $b$.}
	\label{MEP}
\end{figure}

Under thermal fluctuations or external disturbances, the NLC system can transform from a metastable state to another one, leading to a sharp change in the location and topology of the defect pattern \cite{han2019transition,kusumaatmaja2015free}. The transition state is the index-1 saddle point of \eqref{energy_final} and plays a key role in determining the energy barrier of such a transition pathway. The index-$1$ J$\pm$-S(L) is the transition state between R-S(L) and D, and the transition pathway between D and R-S(L) is D$\leftrightarrow$J$\pm$-S(L)$\leftrightarrow$R-S(L). The transition pathway (i.e. the minimum energy path) between two stable D solutions, D$_1$ and D$_2$, is plotted in (Fig. \ref{MEP}(a)). It shows that D$_1$ and D$_2$ cannot be connected by a single transition state, and the transition pathway between D$_1$ and D$_2$ follows a two-stage phase transition that involves the metastable R-S and overcomes two energy barriers brought on by J$\pm$S$_1$ and J$\pm$S$_2$. 
The rectangle's aspect ratio greatly influences this process. For a fixed rectangular domain size, a higher aspect ratio leads to a lower energy barrier (Fig. \ref{MEP}(b)). This result indicates that larger geometrical anisotropy is more advantageous for switching between two D states because of its reduced energy barrier, a result which may have practical significance for designing bistable liquid crystal devices.

\subsection{Bifurcation diagram as a function of $\lambda^2$ at $b=1.25$}

By tracking the indices of the solutions in the solution landscapes in Fig. \ref{b1.25}, we can draw the bifurcation diagram as a function of $\lambda^2$ in Fig. \ref{bifurcation1.25}. The solution branches has two families. One family is bifurcated from BD-S via a pitchfork bifurcation, while the other family is bifurcated from X and BD-L via a saddle-node bifurcation. The S, L, and X class in Fig. \ref{b1.25} are bifurcated from BD-S, BD-L and X solutions in the bifurcation diagram, respectively. The BD-S is the unique stable state for a small $\lambda$. At $\lambda^2 =12$, this BD-S loses its stability and bifurcates into a stable D state; at $\lambda^2=18$, the index-$1$ BD-S bifurcates into an index-$2$ BD-S and an index-$1$ R-S. As $\lambda^2$ increases further, the index-$1$ R-S gains stability and bifurcates into an index-$1$ J-S. At $\lambda^2 = 16$, the X and BD-L emerge from a saddle-node bifurcation, i.e., they emerge suddenly without connecting with other branches. At $\lambda^2 = 16$, the index-$2$ X bifurcates into an index-$3$ X and an index-$2$ H; at $\lambda^2 = 17$, the index-$3$ X bifurcates into an index-$4$ X and an index-$3$ C. When $\lambda^2$ increases further, the index-$3$ C bifurcates into an index-$2$ C and an index-$3$ T at $\lambda^2 =21.2$. For the BD-L branch, the index-$1$ BD-L bifurcates into an index-$2$ BD-L and an index-$1$ R-L at $\lambda^2 = 18$. The index-$1$ R-L gains its stability and bifurcates into an index-$1$ J-L when $\lambda^2 \geqslant 30.2$. More stable and unstable solutions can be obtained if we continue to increase $\lambda^2$.

In \cite{fang2020surface}, the authors studied bifurcations of the typical solutions BD-L, BD-S, R-L, R-S and D, and raised the question: What is the relationship between R-L and BD-S? Now we can answer this question. R-L is bifurcated from the BD-L branch, which emerges from a saddle-node bifurcation for small $b$. R-L and BD-S are on two different solution families; thus, they cannot be connected in the bifurcation diagram.

\begin{figure}[hbtp]
    \centering
    \includegraphics[width=\textwidth]{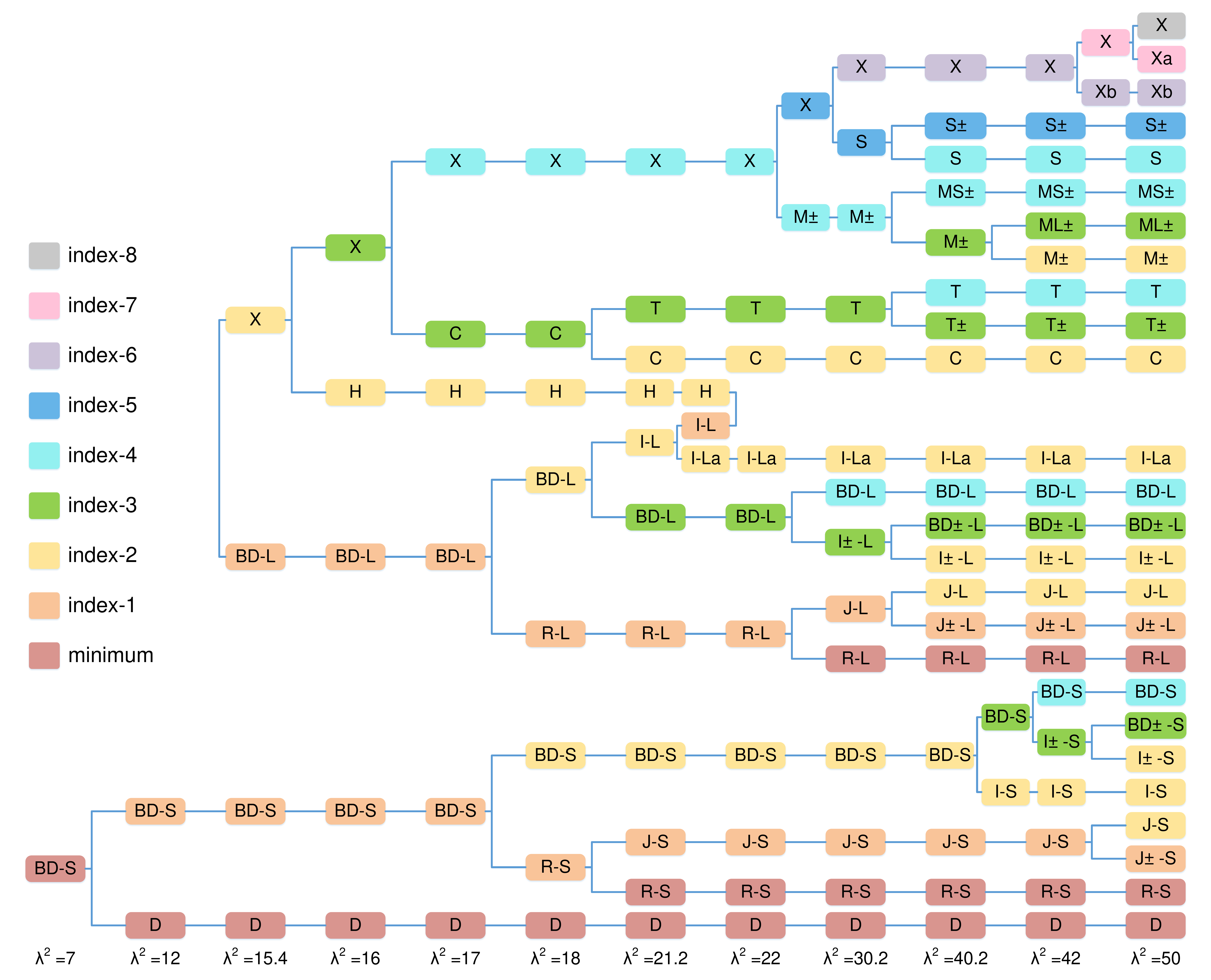}
    \caption{
      Bifurcation diagram as a function of $\lambda^2$ on a rectangle with $b=1.25$. Each small rectangle represents a solution as shown in Fig. \ref{b1.25}. The color of the small rectangle represents the index of the saddle point.}
    \label{bifurcation1.25}
\end{figure}

\subsection{Bifurcation diagram as a function of $b$}

We next study the bifurcation diagram as a function of the aspect ratio $b$. In Fig. \ref{Figure energy vs b}(a), we track the typical solutions on a rectangle in Fig. \ref{fig:typical} and C as $b$ changes with a fixed $\lambda^2=24$ . As $b$ increases, the energy of X, BD-L, R-L, and C increases greatly and these solutions merge and disappear eventually, while BD-S, R-S, and D remain at a low energy level. The energy difference between R-S and D decreases, which is consistent with the result in \cite{lewis2014colloidal}. With a large enough $b$, i.e., a very long rectangle domain, the nematic directors of R-S and D almost align along the long edges of rectangle with less effects on the configurations in corners. 

The effect of geometrical aspect ratio $b$ on the energy of these typical solutions also indicates the effect of $b$ on the bifurcation diagram in Fig. \ref{Figure energy vs b}(b).
When $b=1$ (square), WORS is the parent state, and we have found multiple pairs of rotationally equivalent solutions, such as BD1 and BD2, H1 and H2, R1 and R2 \cite{yin2020construction}. When $b=1.1$, WORS loses the cross structure and relaxes to X. The symmetry-breaking of the rectangle brings about the loss of degeneracy between some rotationally equivalent solutions. At $b=1.2$, the index-1 R-S gains its stability and bifurcates into an index-1 J-S through a pitchfork bifurcation. After that, there is no bifurcation for the S branch with an increasing $b$. On the other hand, as $b$ increases, the indices of X and BD-L increase and bifurcate into new solutions through pitchfork bifurcation. The number of solutions reaches its maximum when $b = 1.5$. When $b>1.5$, the solution number begins to decrease due to saddle-node bifurcations between C and R-L, X and BD-L, T and I-La and a pitchfork bifurcation between S$\pm$ and BD-L. When $b=2$, only the S class exists in the bifurcation diagram.

Based on these numerical findings, our hypothesis is that, at small $\lambda$, the solutions in the BD-L and X classes will disappear as long as $b$ is sufficiently large. Our numerical results may lead to new control strategies for confined NLC systems, since we can filter the energetically unfavorable defect patterns that are confined in a rectangle by adjusting its short edge length and its aspect ratio.

\begin{figure}[hbtp]
    \centering
    \includegraphics[width=\textwidth]{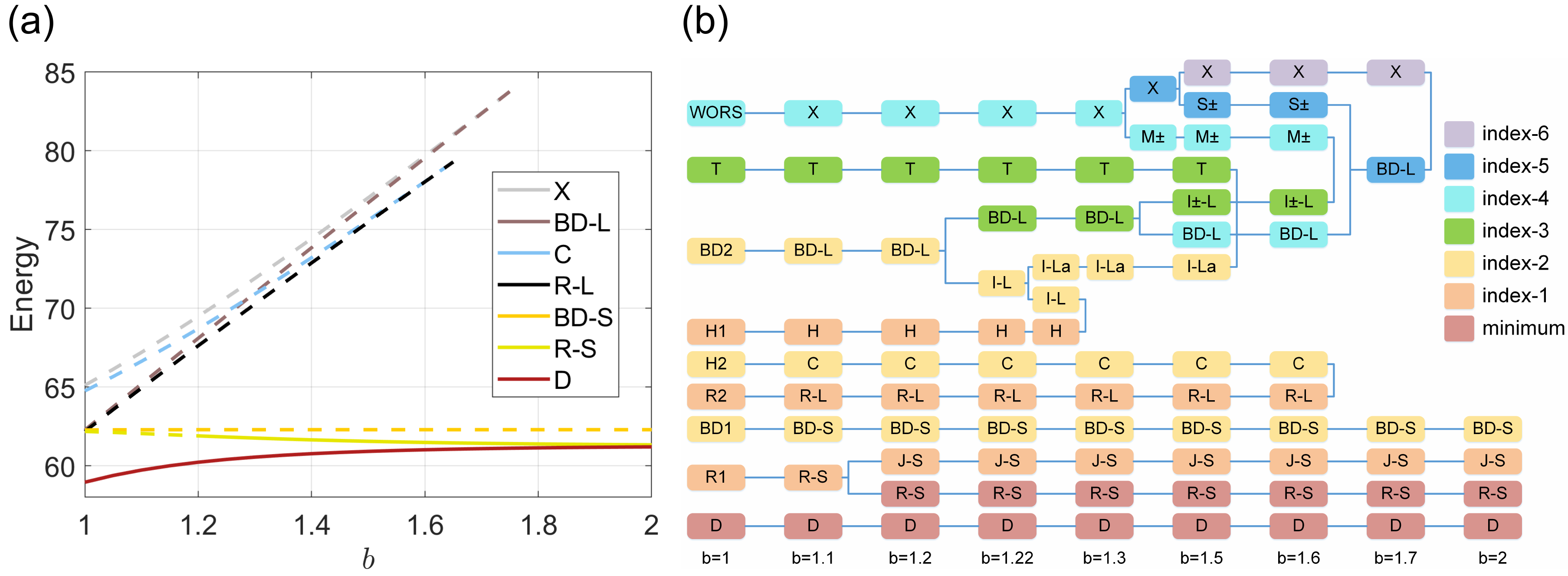}
    \caption{(a) The rLdG energy \eqref{energy_final} of some crucial states versus the aspect ratio $b$ with $\lambda^2=24$. The X state is the parent state of the solution landscape; BD-L(BD-S) is the state with highest index in the L(S) class; R-L, R-S, and D are stable states. (b) Bifurcation diagram of the rLdG model with $\lambda^2=24$. The suffixes ``1'' and ``2'' in the name of the configurations represent rotationally equivalent solutions at $b=1$ (square).}
	\label{Figure energy vs b}
\end{figure}

\subsection{The solution landscapes and bifurcations with $b=1.5$}\label{Sec 1.5}
To further investigate the effect of geometrical anisotropy, we compute solution landscapes in a rectangle with the large aspect ratio $b=1.5$ in Fig. \ref{b1.5} and the corresponding bifurcation diagram as a function of $\lambda^2$ in Fig. \ref{bifurcation1.5}. The effect on the defects along the short edges in S-class solutions is negligible when only varying the aspect ratio $b$. Thus, compared with the case at $b=1.25$, there are no striking differences in S class except for a change in bifurcation points. As for the L and X classes, with the increase of $b$, both the length of line defects and the number of point defects along the long edges increases, which indicates more unstable directions, i.e. higher Morse indices, such as BD-L and the new exotic saddle-point solutions as shown in Fig. \ref{b1.5}. Thus, X and BD-L emerge from saddle-node bifurcation with higher indices compared with the case when $b=1.25$. Furthermore, there are more bifurcations in the L class, and the number of solutions in that class increases from $5$ to $10$ at $\lambda^2=50$ as $b$ increases from $1.25$ to $1.5$.

It is notable that the bifurcation type for the emergence of R-L, C, I-La, and T changes as $b$ increases. At $b = 1.25$, R-L and C emerge from X and BD-L, respectively, through pitchfork bifurcations. T and I-La also emerge from C and I-L through pitchfork bifurcations, respectively (Fig. \ref{bifurcation1.25}). All four solutions are in either the X branch or the BD-L branch. At $b = 1.5$, both the pair index-1 R-L and index-2 C, and the pair index-2 I-La and index-3 T, emerge from saddle-node bifurcations; therefore, they are disconnected with X and BD-L branches (Fig. \ref{bifurcation1.5}).

\begin{figure}[hbtp]
    \centering
    \includegraphics[width=.965\textwidth]{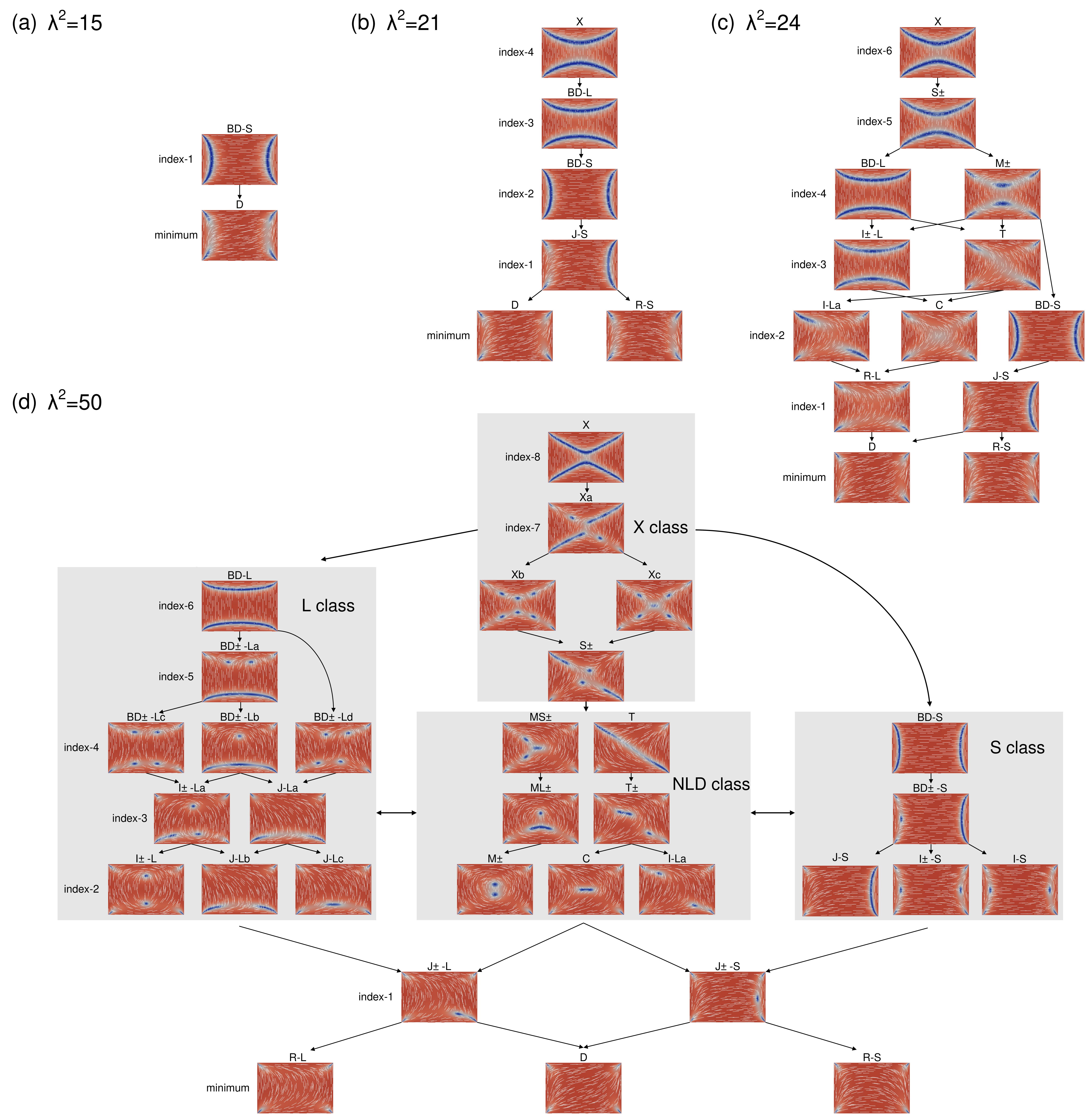}
	\caption{The solution landscapes at $b=1.5$ with (a) $\lambda^2=15$, (b) $\lambda^2=21$, (c) $\lambda^2=24$ and (d) $\lambda^2=50$.}
	\label{b1.5}
\end{figure}


\begin{figure}[hbtp]
    \centering
    \includegraphics[width=.95\textwidth]{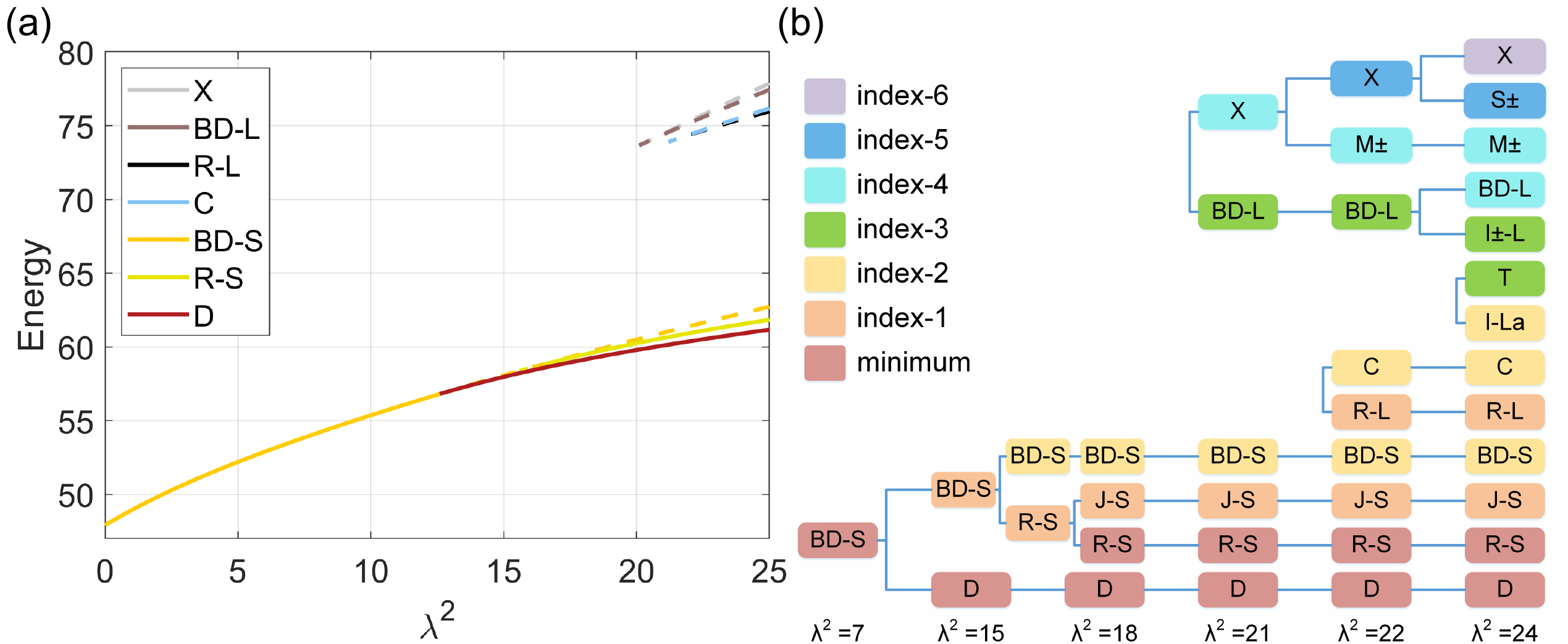}
    \caption{(a) The rLdG energy \eqref{energy_final} versus $\lambda^2$ with $b=1.5$. (b) Bifurcation diagram of a rLdG model on a rectangle with $b=1.25$. Each small rectangle represents a solution as shown in Fig. \ref{b1.5}. }
	\label{bifurcation1.5}
\end{figure}


\section{Conclusion and discussion}
We construct the solution landscapes and bifurcation diagrams of a rLdG model on a rectangle to study the effects of geometry on nematic equilibria at a fixed temperature $A=-\frac{B^2}{3C}$. We theoretically prove that when $\lambda$ is small enough for any $b$, or when $b$ is large enough for a fixed domain size, the rLdG system has a unique solution. This solution is BD-S, which features two line defects near the opposite short edges of a rectangular domain. It is worth noticing that the analogous solution of BD-S on a square, BD, is always unstable, which means that the geometric anisotropy can tune the stability of states. 

We systematically construct the solution landscapes of the rLdG free energy on rectangles with various values of $\lambda$ and $b$. Unlike the WORS, which is always the parent state on the square, the parent state on the rectangle, as domain size increases, changes from BD-S to X, whose line defects are near the rectangle's center region but lacks the cross structure of the WORS. Because of the symmetry breaking, energetically degenerate states on a square emerge as totally distinct states on a rectangle, such as BD-S and BD-L. Thus, compared with the square, there are more solutions in the solution landscapes; as well as connections between solutions are more complicated on a rectangle. We divide high-index solutions into four classes: X class, S class, L class, and NLD class according to the location of the defects and the connections of the solutions. Solutions in S(L) class feature point and/or line defects near the short(long) edges. Therefore, compared with the S class, the indices and the number of solutions in the L class change obviously as $b$ increases, since $b$ is related to the length of long edge. The solution landscape on the rectangle can be regarded as a mountain that has two main paths from the peak (X) to the foot (minima): X class$\rightarrow$L class$\rightarrow$J$\pm$-L$\rightarrow$minima and X class$\rightarrow$ S class$\rightarrow$J$\pm$-S$\rightarrow$minima. For a fixed rectangular domain size, with a higher aspect ratio, the transition pathway between two stable D states has a lower energy barrier. This indicates that larger geometrical anisotropy is more advantageous for switching between bistable D states, which could practically impact the design of bistable liquid crystal devices. This is an interesting example of the effect of geometrical anisotropy on confined defect patterns.

We present the bifurcation diagrams by tracking individual solution branches in the solution landscapes with various $\lambda$ and $b$ to investigate the emergence mechanisms of high-index solutions and the effects of geometrical anisotropy on bifurcation behaviors. For small $b$ ($b \leqslant 1.25$),  seen in the bifurcation diagram as a function of $\lambda^2$, we have three main branches: X branch, BD-L branch, and BD-S branch; i.e., most solutions are bifurcated from these three solutions through one or more pitchfork bifurcations. As $b$ increases, the bifurcation type for the emergence of some solutions change from pitchfork to saddle-node bifurcation, including the stable R-L state, which emerges from the latter bifurcation with a C state at large $b$. As $b$ increases more, the X and BD-L branches disappear via the saddle-node bifurcation and the number of solutions decreases rapidly. In particular, at $\lambda^2=24$, only the S class exists in the solution landscape when $b=2$. This finding suggests a new control strategy for confined NLC systems: we can adjust the short edge length and the aspect ratio to avoid energetically unfavorable defect patterns confined in a rectangle.

The results of this paper suggest several pertinent questions. Can we obtain the bounds for Morse indices of the critical points? If we can estimate such bounds, we may theoretically prove the disappearance of X and BD-L at large $b$. Moreover, although our results are conceived within the framework of a rLdG model with two degrees of freedom on 2D confinement, they are candidates for LdG energy critical points in a thin film. They also exist as a cross-section in 3D confinement, such as z-invariant solutions. What is more, there exist other physically relevant solutions in 3D NLC systems. For example, in a cylinder, we have a 3D critical point that vary across the cylinder height \cite{han2019transition}. Another example is a mixed 3D solution that interpolates between two distinct 2D critical points in the 3D cube \cite{canevari_majumdar_wang_harris}. Using these 2D critical points as a solution database, we can symmetrically investigate rich, exotic 3D solutions and the relation between them and 2D ones. We intend to pursue the solution landscape of confined 3D NLC systems in the future.

\section*{Acknowledgments}
We would like to thank Dr. Jianyuan Yin for helpful discussions and Dr. Lu Klinger for polishing the paper. Y. Han gratefully acknowledges the support from the Royal Society Newton International Fellowship.

\bibliographystyle{unsrt}
\bibliography{references}

\begin{thebibliography}{10}

\bibitem{de1993physics}
P.~G. De~Gennes and J.~Prost.
\newblock {\em The physics of liquid crystals}.
\newblock Number~83. Oxford University Press, 1993.

\bibitem{sonin2018pierre}
A.~A. Sonin.
\newblock {Pierre-Gilles} de {Gennes} and physics of liquid crystals.
\newblock {\em Liquid Crystals Reviews}, 6(2):109--128, 2018.

\bibitem{stewart2019static}
I.~W. Stewart.
\newblock {\em The static and dynamic continuum theory of liquid crystals: a
  mathematical introduction}.
\newblock Crc Press, 2019.

\bibitem{foffano2014dynamics}
G.~Foffano, J.~Lintuvuori, A.~Tiribocchi, and D.~Marenduzzo.
\newblock The dynamics of colloidal intrusions in liquid crystals: A simulation
  perspective.
\newblock {\em Liquid Crystals Reviews}, 2(1):1--27, 2014.

\bibitem{miller2014design}
D.~S. Miller, X.~Wang, and N.~L. Abbott.
\newblock Design of functional materials based on liquid crystalline droplets.
\newblock {\em Chemistry of Materials}, 26(1):496--506, 2014.

\bibitem{doi1988theory}
M.~Doi and S.~F. Edwards.
\newblock {\em The theory of polymer dynamics}, volume~73.
\newblock Oxford University Press, 1988.

\bibitem{wang2021modeling}
W.~Wang, L.~Zhang, and P.~Zhang.
\newblock Modeling and computation of liquid crystals.
\newblock {\em arXiv preprint arXiv:2104.02250}, 2021.

\bibitem{zhang2012onsager}
W.~Y. Zhang, Y.~Jiang, and J.~Z. Chen.
\newblock Onsager model for the structure of rigid rods confined on a spherical
  surface.
\newblock {\em Physical Review Letters}, 108(5):057801, 2012.

\bibitem{liang2014rigid}
Q.~Liang, S.~Ye, P.~Zhang, and J.~Z. Chen.
\newblock Rigid linear particles confined on a spherical surface: Phase diagram
  of nematic defect states.
\newblock {\em The Journal of chemical physics}, 141(24):244901, 2014.

\bibitem{liu2005axial}
H.~Liu, H.~Zhang, and P.~Zhang.
\newblock Axial symmetry and classification of stationary solutions of
  {Doi-Onsager} equation on the sphere with {Maier-Saupe} potential.
\newblock {\em Communications in Mathematical Sciences}, 3(2):201--218, 2005.

\bibitem{yao2018topological}
X.~Yao, H.~Zhang, and J.~Z. Chen.
\newblock Topological defects in two-dimensional liquid crystals confined by a
  box.
\newblock {\em Physical Review E}, 97(5):052707, 2018.

\bibitem{yin2021solution}
J.~Yin, L.~Zhang, and P.~Zhang.
\newblock Solution landscape of {Onsager} functional identifies
  non-axisymmetric critical points.
\newblock {\em arXiv preprint arXiv:2104.09766}, 2021.

\bibitem{hu2016disclination}
Y.~Hu, Y.~Qu, and P.~Zhang.
\newblock On the disclination lines of nematic liquid crystals.
\newblock {\em Communications in Computational Physics}, 19(2):354--379, 2016.

\bibitem{muvsevivc2006two}
I.~Mu{\v{s}}evi{\v{c}}, M.~{\v{S}}karabot, U.~Tkalec, M.~Ravnik, and
  S.~{\v{Z}}umer.
\newblock Two-dimensional nematic colloidal crystals self-assembled by
  topological defects.
\newblock {\em Science}, 313(5789):954--958, 2006.

\bibitem{bajc2016mesh}
I.~Bajc, F.~Hecht, and S.~{\v{Z}}umer.
\newblock A mesh adaptivity scheme on the {Landau--de} {Gennes} functional
  minimization case in 3{D}, and its driving efficiency.
\newblock {\em Journal of Computational Physics}, 321:981--996, 2016.

\bibitem{han2020reduced}
Y.~Han, A.~Majumdar, and L.~Zhang.
\newblock A reduced study for nematic equilibria on two-dimensional polygons.
\newblock {\em SIAM Journal on Applied Mathematics}, 80(4):1678--1703, 2020.

\bibitem{canevari2017order}
G.~Canevari, A.~Majumdar, and A.~Spicer.
\newblock Order reconstruction for nematics on squares and hexagons: A
  {Landau--de} {Gennes} study.
\newblock {\em SIAM Journal on Applied Mathematics}, 77(1):267--293, 2017.

\bibitem{robinson2017molecular}
M.~Robinson, C.~Luo, P.~E. Farrell, R.~Erban, and A.~Majumdar.
\newblock From molecular to continuum modelling of bistable liquid crystal
  devices.
\newblock {\em Liquid Crystals}, 44(14-15):2267--2284, 2017.

\bibitem{milnor2016morse}
J.~Milnor.
\newblock {\em Morse Theory.(AM-51), Volume 51}.
\newblock Princeton University Press, 2016.

\bibitem{kitson2002controllable}
S.~Kitson and A.~Geisow.
\newblock Controllable alignment of nematic liquid crystals around microscopic
  posts: Stabilization of multiple states.
\newblock {\em Applied physics letters}, 80(19):3635--3637, 2002.

\bibitem{tsakonas2007multistable}
C.~Tsakonas, A.~Davidson, C.~Brown, and N.~J. Mottram.
\newblock Multistable alignment states in nematic liquid crystal filled wells.
\newblock {\em Applied physics letters}, 90(11):111913, 2007.

\bibitem{kralj2014order}
S.~Kralj and A.~Majumdar.
\newblock Order reconstruction patterns in nematic liquid crystal wells.
\newblock {\em Proceedings of the Royal Society A: Mathematical, Physical and
  Engineering Sciences}, 470(2169):20140276, 2014.

\bibitem{yin2020construction}
J.~Yin, Y.~Wang, J.~Z. Chen, P.~Zhang, and L.~Zhang.
\newblock Construction of a pathway map on a complicated energy landscape.
\newblock {\em Physical Review Letters}, 124(9):090601, 2020.

\bibitem{han2020solution}
Y.~Han, J.~Yin, P.~Zhang, A.~Majumdar, and L.~Zhang.
\newblock Solution landscapes of nematic liquid crystals confined on a hexagon.
\newblock {\em Nonlinearity}, 34(4):2048, 2021.

\bibitem{fang2020surface}
L.~Fang, A.~Majumdar, and L.~Zhang.
\newblock Surface, size and topological effects for some nematic equilibria on
  rectangular domains.
\newblock {\em Mathematics and Mechanics of Solids}, 25(5):1101--1123, 2020.

\bibitem{mottram2014introduction}
N.~J. Mottram and C.~J. Newton.
\newblock Introduction to {Q}-tensor theory.
\newblock {\em arXiv preprint arXiv:1409.3542}, 2014.

\bibitem{primicerio1995eg}
E.~G. Virga.
\newblock {\em Variational theories for liquid crystals}.
\newblock Chapman and Hall/CRC, 2018.

\bibitem{brodin2010melting}
A.~Brodin, A.~Nych, U.~Ognysta, B.~Lev, V.~Nazarenko, M.~{\v{S}}karabot, and
  I.~Mu{\v{s}}evi{\v{c}}.
\newblock Melting of {2D} liquid crystal colloidal structure.
\newblock {\em Condensed Matter Physics}, 2010.

\bibitem{bisht_epl}
A.~Majumdar, Y.~W. Wang, B.~Varsha, and K.~Bisht.
\newblock Tailored morhpologies in {2D} ferronematic wells.
\newblock {\em Europhysics Letters}, 2019.

\bibitem{gupta2005texture}
G.~Gupta and A.~D. Rey.
\newblock Texture modeling in carbon--carbon composites based on mesophase
  precursor matrices.
\newblock {\em Carbon}, 43(7):1400--1406, 2005.

\bibitem{Mu2008Self}
I.~Musevic and M.~Skarabot.
\newblock Self-assembly of nematic colloids.
\newblock {\em Soft Matter}, 4(2):195--199, 2008.

\bibitem{golovaty2017dimension}
D.~Golovaty, J.~A. Montero, and P.~Sternberg.
\newblock Dimension reduction for the {Landau--de} {Gennes} model on curved
  nematic thin films.
\newblock {\em Journal of Nonlinear Science}, 27(6):1905--1932, 2017.

\bibitem{canevari_majumdar_wang_harris}
G.~Canevari, J.~Harris, A.~Majumdar, and Y.~Wang.
\newblock The well order reconstruction solution for three-dimensional wells,
  in the {Landau--de} {Gennes} theory.
\newblock {\em International Journal of Nonlinear Mechanics}, 119:103342, 2020.

\bibitem{luo2012multistability}
C.~Luo, A.~Majumdar, and R.~Erban.
\newblock Multistability in planar liquid crystal wells.
\newblock {\em Physical Review E}, 85(6):061702, 2012.

\bibitem{lamy2014bifurcation}
X.~Lamy.
\newblock Bifurcation analysis in a frustrated nematic cell.
\newblock {\em Journal of Nonlinear Science}, 24(6):1197--1230, 2014.

\bibitem{majumdar2010landau}
A.~Majumdar and A.~Zarnescu.
\newblock {Landau--de} {Gennes} theory of nematic liquid crystals: the
  {Oseen--Frank} limit and beyond.
\newblock {\em Archive for Rational Mechanics and Analysis}, 196(1):227--280,
  2010.

\bibitem{2019High}
J.~Yin, L.~Zhang, and P.~Zhang.
\newblock High-index optimization-based shrinking dimer method for finding
  high-index saddle points.
\newblock {\em SIAM Journal on Scientific Computing}, 41(6):A3576--A3595, 2019.

\bibitem{lewis2014colloidal}
A.~H. Lewis, I.~Garlea, J.~Alvarado, O.~J. Dammone, P.~D. Howell, A.~Majumdar,
  B.~M. Mulder, M.~Lettinga, G.~H. Koenderink, and D.~G. Aarts.
\newblock Colloidal liquid crystals in rectangular confinement: theory and
  experiment.
\newblock {\em Soft Matter}, 10(39):7865--7873, 2014.

\bibitem{han2019transition}
Y.~Han, Y.~Hu, P.~Zhang, and L.~Zhang.
\newblock Transition pathways between defect patterns in confined nematic
  liquid crystals.
\newblock {\em Journal of Computational Physics}, 396:1--11, 2019.

\bibitem{kusumaatmaja2015free}
H.~Kusumaatmaja and A.~Majumdar.
\newblock Free energy pathways of a multistable liquid crystal device.
\newblock {\em Soft Matter}, 11(24):4809--4817, 2015.

\end{thebibliography}
\end{document}